\pgfplotsset{compat=1.10}
\definecolor{Gray}{gray}{0.90}
\newtheorem{theorem}{\bf{Theorem}}[section]
\newtheorem{lem}[theorem]{\bf{Lemma}}
\newtheorem{prop}[theorem]{\bf{Proposition}}
\newtheorem{fact}[theorem]{Fact}
\newenvironment{definition}[1][Definition]{\begin{trivlist}
\item[\hskip \labelsep {\bfseries #1}]}{\end{trivlist}}
\def\qed{\hfill\rule[-1pt]{5pt}{5pt}\par\medskip}
\newcommand{\calG}[0]{\mathcal{G}}
\newcommand{\calV}[0]{\mathcal{V}}
\newcommand{\calE}[0]{\mathcal{E}}
\newcommand{\calA}[0]{\mathcal{A}}
\newcommand{\calB}[0]{\mathcal{B}}
\newcommand{\calS}[0]{\mathcal{S}}
\newcommand{\calN}[0]{\mathcal{N}}
\newcommand{\calH}[0]{\mathcal{H}}
\newcommand{\calD}[0]{\mathcal{D}}
\newcommand{\calL}[0]{\mathcal{L}}
\definecolor{amber}{rgb}{1.0, 0.75, 0.0}
\begin{document}

\title{Improving Network Robustness through Edge Augmentation While Preserving Strong Structural Controllability}
\author{Waseem~Abbas, Mudassir~Shabbir, Hassan~Jaleel, and Xenofon~Koutsoukos
\thanks{This work was supported in part by the National Institute of Standards and Technology under Grant 70NANB18H198.}
\thanks{W.~Abbas and X.~Koutsoukos are with the Electrical Engineering and Computer Science Department at the Vanderbilt University, Nashville, TN, USA (Emails: waseem.abbas@vanderbilt.edu, xenofon.koutsoukos@vanderbilt.edu).}
\thanks{M.~Shabbir is with the Computer Science Department at the Information Technology University, Lahore, Pakistan (Email: mudassir@rutgers.edu).}
\thanks{H.~Jaleel is with the Electrical Engineering Department at Lahore University of Management Sciences, Lahore, Pakistan (Email: hassan.jaleel@lums.edu.pk).}
}

\maketitle
\begin{abstract}
In this paper, we consider a network of agents with Laplacian dynamics, and study the problem of improving network robustness by adding a maximum number of edges within the network while preserving a lower bound on its strong structural controllability (SSC) at the same time. Edge augmentation increases network's robustness to noise and structural changes, however, it could also deteriorate network controllability. Thus, by exploiting relationship between network controllability and distances between nodes in graphs, we formulate an edge augmentation problem with a constraint to preserve distances between certain node pairs, which in turn guarantees that a lower bound on SSC is maintained even after adding edges. In this direction, first we choose a node pair and maximally add edges while maintaining the distance between selected nodes. We show that an optimal solution belongs to a certain class of graphs called clique chains. Then, we present an algorithm to add edges while preserving distances between a certain collection of nodes. Further, we present a randomized algorithm that guarantees a desired approximation ratio with high probability to solve the edge augmentation problem.  Finally, we evaluate our results on various networks.
\end{abstract}

\begin{keywords}
Strong structural controllability, graph distances, edge augmentation, network robustness.
\end{keywords}
\section{Introduction}
Network controllability has been an active research topic in the broad domain of systems and control as well as in network science in recent years \cite{pasqualetti2014controllability}. The main goal is to understand how can we manipulate a network of dynamical agents, often represented by a directed or an undirected graph, by controlling only a small subset of agents, referred to as leaders. In a networked dynamical system, the underlying network topology significantly influences its controllability. Therefore, it is crucial to develop a topological view-point of network controllability \cite{chapman2013strong,monshizadeh2014zero,aguilar2015graph,yaziciouglu2016graph,mousavi2016controllability,van2017distance}. Another important aspect here is to consider the effect of weights given by nodes to each other's information, which is typically modeled by assigning edge weights in the graph. In fact, we say that a network is strong structurally controllable if it is possible to control the entire network with an arbitrary choice of non-zero edge weights. This controllability notion, which is independent of edge weights is particularly useful when exact coupling strengths between nodes are unknown.

Along with the network controllability, we also desire to improve other attributes of a networked dynamical system, in particular, its robustness to changes in the underlying network topology as well as to the noisy information. A network can be made robust by adding more links (edges) between nodes. For instance, a widely used measure of network robustness is Kirchhoff index $K_f$, which is simply the sum of the reciprocal of the (non-zero) eigenvalues of the graph Laplacian. It measures the effect of structural changes in the network topology as well as the effect of noise on the overall dynamics \cite{young2010robustness,ellens2011effective,abbas2012robust,zelazo2015robustness,YasinCDC19}. It is well known that adding edges always improves network's robustness as measured by $K_f$. In fact, network robustness increases monotonically with edge additions.

Adding edges and densifying a graph is effective on the one hand as it improves network's robustness, but on the other hand it can also deteriorate network controllability \cite{Abbas2019,pasqualetti2018fragility}. For instance, we have shown in \cite{Abbas2019} that for given network parameters, such as the number of nodes and the diameter, networks with maximal robustness require a large number of input nodes (leaders) to become controllable. Similarly networks that are controllable with minimum leaders are least robust to noise and to structural changes. Thus, it is important to identify edges whose addition to the network minimally reduces its controllability.

In this paper, we study the problem of maximally adding edges in a network while preserving its strong structural controllability (SSC). It is easy to verify, in fact in a linear time, whether the entire network is strong structurally controllable \cite{weber2014linear}.  If it is not, computing exactly how much of the network is indeed strong structurally controllable is an extremely challenging problem. Thus, lower bounds on the SSC of networks have been studied in the literature \cite{chapman2013strong,monshizadeh2015strong,yaziciouglu2016graph,zhang2014upper,monshizadeh2014zero,mousavi2018structural,van2017distance}. In this work, we utilize a tight lower bound on SSC, which is based on the topological distances between nodes, and propose algorithms to densify the original network while preserving this lower bound. Our main contributions are:

\begin{itemize}
\item We formulate the problem of adding maximal edges in a network while preserving a lower bound on SSC as an edge augmentation problem in which distances between certain node pairs in a graph need to be maintained. 

\item We show that for a fixed node pair, optimal solution to the distance preserving edge augmentation problem belongs to a class of graphs known as clique chains.

\item We provide two algorithms, including a randomized algorithm, to add edges in a graph while preserving distances between a collection of node pairs, thus solving the problem of adding edges while ensuring that a lower bound on SSC is maintained. We also show that the proposed randomized algorithm is an approximation algorithm with high probability.


\item Finally, we numerically evaluate our results on various networks.  
\end{itemize}

Numerous results are available in the context of graph sparsification, where the objective is to remove edges while preserving distances between nodes exactly or approximately (e.g., see \cite{bernstein2019distance,baswana2007simple,bollobas2005sparse,spielman2011graph}). However, to the best of our knowledge, this paper is novel in considering the opposite problem, that is the densification of graphs while preserving distances between nodes, and then applying it towards preserving SSC in networks. We mention that a recent paper \cite{mousavi2017robust} studies the problem of characterizing edges, which if added to a directed network preserve its SSC. However, results hold if the entire network is strong structurally controllable. In our case, we solve the maximal edge addition problem in networks even if they are only partially strong structurally controllable. Further studies in the domain of network controllability include selecting inputs and leader nodes to structurally control a network, for instance see \cite{pequito2015complexity,tzoumas2015minimal,pequito2015framework,olshevsky2014minimal,yaziciouglu2013leader,clark2017submodularity}.

The rest of the paper is organized as below: Section \ref{sec:prelim} introduces preliminaries and formulates the problem. Section \ref{sec:DEPA_node_pair} presents an overview of our approach along with the distance preserving edge augmentation problem. Section \ref{sec:DEPA_SSC} provides algorithms to add edges in a network while preserving SSC. Section \ref{sec:evaluation} presents an evaluation of our results, and Section \ref{sec:Con} concludes the paper.

\section{Preliminaries and Problem Description}
\label{sec:prelim}
\subsection{Notations}
We consider a network of dynamical agents modeled by a simple undirected graph $\calG=(\calV,\calE)$, in which the node set represents agents and the edge set represents interconnections between agents. Nodes\footnote{We use terms node and vertex interchangeably throughout the paper.} $a$ and $b$ are \emph{adjacent} if there is an edge between them. The \emph{neighborhood} of node $a$, denoted by $\calN_a$ is the set of nodes adjacent to $a$, and the number of nodes in $\calN_a$ is the \emph{degree} of $a$. The distance between nodes $a$ and $b$ in $\calG$, denoted by $d_\calG(a,b)$, is the number of edges in the shortest path between $a$ and $b$. The maximum distance between any two nodes in a graph is called its \emph{diameter}. Moreover, edges can be weighted by some weight function, $w:\calE\mapsto \mathbb{R}_+$. The edge weight represents the coupling strength between nodes.  

Agent $a$ has a state $x_a\in\mathbb{R}$. Without loss of generality, the overall state of the network is $x\in\mathbb{R}^n$ where $n = |\calV|$. Agents follow the Laplacian dynamics given by

\begin{equation}
\label{eq:dynamics}
\dot{x} = -\calL_w\ x + \calB u.
\end{equation}

Here, $\calL_w\in\mathbb{R}^{n\times n}$ is a weighted Laplacian matrix of the graph $\calG$, and is defined as $\calL_w = -\calA + \Delta$, where $\calA$ is the weighted adjacency matrix in which the $ij^{th}$ entry, that is $\calA_{ij}$ is simply the weight on the edge between nodes $i$ and $j$ if such an edge exists, and is zero otherwise. $\Delta$ is the diagonal matrix whose $i^{th}$ diagonal entry is simply $\sum_{j=1}^{n} \calA_{ij}$. At the same time $\calB\in\mathbb{R}^{n\times m}$ is an input matrix, where $m$ is the number of inputs, or simply the number of leaders---nodes with an external control signal. Let $\calV_\ell =~ \{\ell_1,\ell_2,\cdots,\ell_m\} \subset ~\calV$ be the set of leaders, then $\calB_{i,j} = 1$ if node $i\in\calV$ is also a leader $\ell_j$, otherwise $\calB_{i,j} = 0$. 

\subsection{Strong Structural Controllability (SSC)} 
A state $x_f\in\mathbb{R}^n$ is reachable if an input exists that can drive the network in \eqref{eq:dynamics} from an initial state $x_0~=~\left[0 \; 0 \; \cdots 0\right]^T\in\mathbb{R}^n$ to $x_f$ in a finite amount of time. A network is \emph{completely controllable} if every point in $\mathbb{R}^n$ is reachable. Complete controllability of a network $\calG(\calV,\calE)$ with given edge weights $w$ and leaders $\calV_\ell$ can be checked by computing the rank of the following controllability matrix.

\begin{equation*}
\small
\Gamma(\calL_w,\calB) = 
\left[
\begin{array}{ccccc}
\calB & (-\calL_w)\calB & (-\calL_w)^2\calB & \cdots & (-\calL_w)^{n-1}\calB
\end{array}
\right].
\end{equation*}

The network is completely controllable if and only if $\text{rank}(\Gamma(\calL_w,\calB)) = n$, and the pair $(\calL_w,\calB)$ is called the controllable pair. In fact, the rank of $\Gamma$ defines the dimension of controllable subspace consisting of all the reachable states.

In a network, if we fix the leaders ($\calB$) and the edge set ($\calE$), the rank of controllability matrix can change with a different choice of edge weights $w'$, that is the $\text{rank}(\Gamma(\calL_w,\calB))$ might be different from the $\text{rank}(\Gamma(\calL_{w'},\calB))$. A network $\calG=~(\cal(V,\calE)$ is called \emph{strong structurally controllable} with a given leader set $\calV_\ell$ if it is completely controllable for any choice of edge weights, or in other words $(\calL_w,\calB)$ is a controllable pair for any choice of $w$. At the same time, the \emph{dimension of strong structurally controllable subspace}, or simply the \emph{dimension of SSC} is the minimum rank of the controllablility matrix $\Gamma(\calL_w,\calB)$ over all possible edge weights $w$. Roughly, the dimension of SSC quantifies how much of the network can be controlled by a given leader set with arbitrary edge weights. 

\subsection{Problem Description}
Our goal is to add a maximum number of edges within the network while preserving the dimension of its strong structurally controllable subspace at the same time. However, computing the exact dimension of SSC with a given set of leaders is a computationally hard problem. As a result, finding good lower bounds on the dimension of such a subspace has been an active research topic. Our approach is to select a tight lower bound, and then add maximal edges within the network while preserving the lower bound on the dimension of SSC. We discuss the lower bound used in this work in the next subsection.
\begin{definition}[Problem]
\label{def:Problem}
Consider a network of agents $\calG=(\calV,\calE)$ with the dynamics in \eqref{eq:dynamics}. Let $\calV_\ell\subset\calV$ be the leaders and the dimension of SSC of $\calG$ with $\calV_\ell$ is at least $\delta$. Then, our task is to find a maximum size edge set $\calE'$ such that $\calE\subseteq\calE'$ and the dimension of SSC of the network induced by $\calV$ and $\calE'$, say $\calH=(\calV,\calE')$, is also at least $\delta$ with the same leaders $\calV_\ell$. 
\end{definition}
\subsection{A Tight Lower Bound on SSC Based on Graph Distances}
We utilize a lower bound proposed in \cite{yaziciouglu2016graph} that is based on the distances between nodes in a graph. Assuming $m$ leaders $\calV_\ell = \{\ell_1,\cdots,\ell_m\}$, we define a \emph{distance-to-leader} vector for each node $a\in\calV$ in $\calG$ as below,

\begin{equation*}
\label{eq:DLvector}
D_a = \left[
\begin{array}{ccccc}
d_\calG(\ell_1,a) & d_\calG(\ell_2,a) & \cdots & d_\calG(\ell_m,a)
\end{array}
\right]^T \in \mathbb{Z}_+^m.
\end{equation*}

The $j^{th}$ component of $D_a$, denoted by $D_{a,j}$, is the distance between node $a$ and leader $j$, that is $D_{a,j} = d_\calG(\ell_j,a)$. Next, we define a sequence of such vectors, called as \emph{pseudo-monotonically increasing} sequence as below.

\begin{definition}{(\emph{Pseudo-monotonically Increasing Sequence (PMI)}}
\label{def:PMI}
A sequence of distance-to-leader vectors $\calD$ is PMI if for every $i^{th}$ vector in the sequence, denoted by $\calD_i$, there exists some $\alpha(i)\in\{1,2,\cdots,m\}$ such that 
\begin{equation}
\label{eq:PMIcondition}
\calD_{i,\alpha(i)} < \calD_{j,\alpha(i)}, \;\;\forall j > i.
\end{equation}
\end{definition}

An example of distance-to-leader vectors is illustrated in Figure \ref{fig:PMI}(a). A PMI sequence of length five is 

\begin{equation*}
\label{eq:PMIexample}
\footnotesize
\calD = 
\left[
\left[
\begin{array}{c}
\textcircled{0} \\ 2
\end{array}
\right]
,
\left[
\begin{array}{c}
2 \\ \textcircled{0}
\end{array}
\right]
,
\left[
\begin{array}{c}
\textcircled{1} \\ 2
\end{array}
\right]
,
\left[
\begin{array}{c}
\textcircled{2} \\ 1
\end{array}
\right]
,
\left[
\begin{array}{c}
\textcircled{3} \\ 1
\end{array}
\right]
\right].
\end{equation*}

Note that for each vector in the sequence, there is an index---of the circled value---such that the values of all the subsequent vectors at the corresponding index are strictly greater than the circled value, thus satisfying the condition in \eqref{eq:PMIcondition}. 

\begin{figure}[htb]
\centering
\includegraphics[scale=0.65]{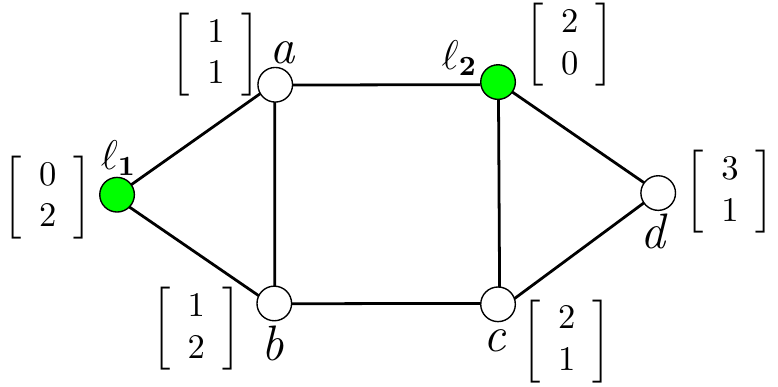}
\caption{A network with two leaders $\calV_\ell = \{\ell_1,\ell_2\}$. The distance-to-leader vectors of all nodes are also shown. A PMI sequence of length five is $\calD = [\calD_1 \; \calD_2 \; \cdots \; \calD_5] = [D_{\ell_1} \; D_{\ell_2} \; D_b \; D_c \; D_d]$.}
\label{fig:PMI}
\end{figure}

The length of PMI sequence of distance-to-leader vectors is related to strong structural controllability as below.

\begin{theorem} \cite{yaziciouglu2016graph}
\label{thm:PMI}
If $\delta$ is the length of a longest PMI sequence of distance-to-leader vectors in a network, then the dimension of SSC of the network is at least $\delta$.
\end{theorem}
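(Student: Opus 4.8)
The plan is to argue directly from the definition of the dimension of SSC as $\min_w \text{rank}\,\Gamma(\calL_w,\calB)$: it suffices to show that, for an arbitrary fixed choice of edge weights $w$, the matrix $\Gamma(\calL_w,\calB)$ has at least $\delta$ linearly independent columns. Let $v_1,\dots,v_\delta$ be the nodes whose distance-to-leader vectors $D_{v_1},\dots,D_{v_\delta}$ form a longest PMI sequence, and let $\alpha(1),\dots,\alpha(\delta)$ be the corresponding witness indices from Definition~\ref{def:PMI}. Set $k_i = D_{v_i,\alpha(i)} = d_\calG(\ell_{\alpha(i)},v_i)$. Since the distances involved are at most $n-1$, the vector $\gamma_i := (-\calL_w)^{k_i}\, e_{\ell_{\alpha(i)}}$ is precisely the column of $\Gamma(\calL_w,\calB)$ indexed by the $k_i$-th Laplacian power and the $\alpha(i)$-th leader, so it remains to show that $\gamma_1,\dots,\gamma_\delta$ are linearly independent.

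The engine of the argument is an elementary fact about powers of the Laplacian. Since $-\calL_w = \calA - \Delta$, the $(a,b)$ entry of $(-\calL_w)^{k}$ is a sum over length-$k$ walks from $a$ to $b$ in which each off-diagonal step contributes a positive edge weight and each self-loop contributes a (non-positive) diagonal entry. Hence $[(-\calL_w)^{k}]_{ab}=0$ whenever $d_\calG(a,b)>k$ (there is no walk of length $k$), while whenever $d_\calG(a,b)=k$ the only length-$k$ walks from $a$ to $b$ are the shortest $a$--$b$ paths, each contributing a product of $k$ positive weights; therefore $[(-\calL_w)^{k}]_{ab}>0$, and in particular it is nonzero for \emph{every} choice of $w$. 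This last no-cancellation observation is exactly where strong structural — rather than merely structural — controllability is being used, and I expect making it precise, together with noting that the vectors of a PMI sequence, hence the nodes $v_1,\dots,v_\delta$, are pairwise distinct (two equal distance-to-leader vectors would violate the PMI inequality), to be the most delicate bookkeeping in the proof.

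Granting the above, form the $\delta\times\delta$ matrix $M$ with $M_{ij} = [\gamma_j]_{v_i} = [(-\calL_w)^{k_j}]_{v_i,\ell_{\alpha(j)}}$, i.e.\ the columns $\gamma_1,\dots,\gamma_\delta$ restricted to the rows indexed by $v_1,\dots,v_\delta$. On the diagonal, $d_\calG(v_i,\ell_{\alpha(i)})=k_i$ gives $M_{ii}>0$. Below the diagonal, for $i>j$ the PMI property applied to the $j$-th vector and its witness $\alpha(j)$ gives $d_\calG(v_i,\ell_{\alpha(j)}) = D_{v_i,\alpha(j)} > D_{v_j,\alpha(j)} = k_j$, so $M_{ij}=0$. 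Thus $M$ is upper triangular with a strictly positive diagonal, so $\det M = \prod_i M_{ii}\neq 0$; consequently $\gamma_1,\dots,\gamma_\delta$ are linearly independent columns of $\Gamma(\calL_w,\calB)$. This yields $\text{rank}\,\Gamma(\calL_w,\calB)\ge \delta$ for every $w$, i.e.\ the dimension of SSC is at least $\delta$. The degenerate case $k_i=0$, where a leader itself appears in the sequence, needs no separate treatment: then $\gamma_i = e_{\ell_{\alpha(i)}}$ and the same triangularity argument goes through.
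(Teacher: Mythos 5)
Your proof is correct, and it is essentially the argument in the cited source \cite{yaziciouglu2016graph}: the paper itself does not reprove Theorem~\ref{thm:PMI} but imports it, and the standard proof there is exactly your two ingredients — the sign-definiteness/vanishing of $[(-\calL_w)^{k}]_{ab}$ according to whether $d_\calG(a,b)=k$ or $d_\calG(a,b)>k$ (with no cancellation because $w:\calE\mapsto\mathbb{R}_+$ makes every shortest-path term positive), followed by the triangular submatrix built from the PMI witness indices. Your handling of the side cases (distinctness of the nodes in the sequence, $k_i\le n-1$, and $k_i=0$) is also right, so nothing is missing.
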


 In \cite{MudassirCDC19}, we presented a dynamic programming based exact algorithm and an approximate greedy algorithm that returns near optimal PMI sequence of distance-to-leader vectors in $O(mn\log n)$ time where $m$ is the number of leaders and $n$ is the total number of nodes in a graph.
\section{Edge Augmentation While Preserving Distances in Graphs}
\label{sec:DEPA_node_pair}

Our approach is to add a maximal edge set to the original graph while ensuring that the maximum lengths of PMI sequences\footnote{For brevity, we use the term `PMI sequence' instead of `PMI sequence of distance-to-leader vectors'.} of the resulting and the original graphs remain the same. As a result, we preserve a lower bound on the dimension of SSC in the original graph even after adding edges to it. We note that the bound based on the maximum length of PMI sequence (Theorem \ref{thm:PMI}) is tight and numerical evaluation in \cite{MudassirCDC19} shows that it generally performs better than the other known bounds, such as the one based on zero forcing sets \cite{monshizadeh2015strong}. Moreover, we provide algorithms to efficiently compute maximum length PMI sequence in \cite{MudassirCDC19}. 

We proceed by letting $\calG=(\calV,\calE)$ and $\calH=(\calV,\calE')$ respectively denote the original graph and the graph obtained after adding edges, where $\calE\subseteq\calE'$. 
At the same time, consider $\calD$ to be a PMI sequence of maximum length in $\calG$ with $\calV_\ell$ leaders. If $\calD$ is of length $\delta$, then by Theorem \ref{thm:PMI}, the dimension of SSC of $\calG$ is at least $\delta$. At the same time, if $\calH$ is such that $d_\calG(\ell,a) = d_\calH(\ell,a)$, where $a$ is any such node whose distance-to-leader vector is included in $\calD$ and $\ell$ is an arbitrary leader in $\calV_\ell$, then $\calD$ is also a PMI sequence in $\calH$. Consequently, the dimension of SSC of $\calH$ is also lower bounded by $\delta$. 
Thus, to add edges while preserving a lower bound on the dimension of SSC, our approach is to add edges while ensuring that the distance between leaders and a certain subset of nodes (whose distance-to-leader vectors are included in a maximum length PMI sequence of $\calG$) are preserved. In fact, if there are $m$ leaders, and the maximum length PMI sequence is $|\calD| = \delta$, then we need to preserve distances between $\frac{m(m-1)}{2} + m(\delta - m)$ node pairs. Thus, the problem of edge addition while preserving strong structural controllability becomes the edge augmentation problem in networks while preserving distances between nodes.

\subsection{Adding Edges While Preserving Distance Between Two Nodes}
\label{sec:DPEA}
We proceed by fixing a node pair $a,b\in\calV$ and finding a maximal edge set, which if added to the original graph, preserves the distance between $a$ and $b$. We call this as the \emph{Distance Preserving Edge Augmentation (DPEA) problem}, formally stated below. We solve the DPEA problem for all the node pairs $v,\ell$ where $\ell$ is a leader and $v$ is a node whose distance-to-leader vector is included in a maximum length PMI sequence $\calD$. Taking an intersection of solutions to all the instances of DPEA problem then gives a maximal edge set that can be added to the original graph $\calG$ to obtain a new graph $\calH$ in which distances between leaders and nodes in the PMI sequence $\calD$ are preserved. As a result, $\calD$ is also a PMI sequence in $\calH$ and a lower bound on the dimension of SSC of $\calG$ also holds for $\calH$. Next, we formulate and solve the DPEA problem.

\begin{definition}[Distance Preserving Edge Augmentation (DPEA)]
Given an undirected graph $\mathcal{G} = (\mathcal{V},\mathcal{E})$ and $a,b\in\mathcal{V}$ such that $d_\calG(a,b)=k$, we are interested in $\mathcal{H}=(\mathcal{V},\mathcal{E'})$ where $\mathcal{E}\subseteq \mathcal{E}'$, such that $d_\calG(a,b)=d_\calH(a,b)=k$. Our goal is to find $\mathcal{H}=(\mathcal{V},\mathcal{E'})$ with the maximum $|\mathcal{E}'|$.
\end{definition}

We show that an optimal solution to the DPEA problem for a given pair of nodes belongs to a class of graphs known as \emph{clique chains} \cite{ellens2011effective}, which we define below.

\begin{definition}{(\emph{Clique chain)}}
\label{def:CliqueChains}
Let $n_0,n_1,\cdots,n_{k} \in \mathbb{Z}_+$ and $n=\sum_{i=0}^{k}n_i$, then a clique chain of $n$ nodes is a graph obtained from a path graph of length $k$ by replacing each node with a clique\footnote{All vertices in a clique are pair-wise adjacent.} of size $n_i$ such that the vertices in distinct cliques are adjacent if and only if the corresponding vertices in the path graph are adjacent. We denote such a clique chain by $\mathcal{G}_k(n_0,\cdots,n_{k})$.
\end{definition}

An example of a clique chain is illustrated in Figure \ref{fig:CC}. Note that the diameter of $\mathcal{G}_k(n_0,\cdots,n_{k})$ is $k$.

\begin{figure}[h]
\centering
\includegraphics[scale=0.7]{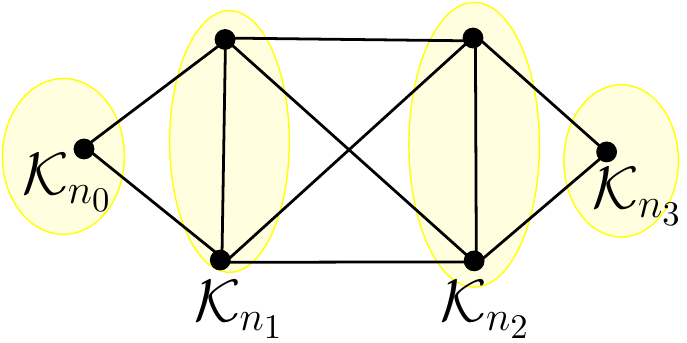}
\caption{A clique chain $\calG_3(1,2,2,1)$ with $n_0 = 1$, $n_1 = 2$, $n_2 = 2$ and $n_3 = 1$.}
\label{fig:CC}
\end{figure}

\begin{theorem}
\label{thm:clique_chain_optimal}
For a given $\cal{G}=(\calV,\calE)$, and $a,b\in\calV$ where $d_{\mathcal{G}}(a,b) > 1$, optimal solution to the DPEA problem is a clique chain of the form $\calG_k(n_0 = 1, n_1,\cdots,n_{k-1},n_{k} = 1)$, where $\sum_{i=0}^{k}n_i =~|\calV|$.
\end{theorem}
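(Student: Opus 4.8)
The plan is to establish the (formally stronger) claim that \emph{every} optimal solution $\calH=(\calV,\calE')$ of the DPEA instance $(\calG,a,b)$, with $k:=d_\calG(a,b)>1$, is a clique chain $\calG_k(1,n_1,\dots,n_{k-1},1)$. I would fix an optimal $\calH$ and apply three structure-improving rewrites, each of which (i) only \emph{adds} edges, so $\calE\subseteq\calE'$ is preserved automatically and only the constraint $d_\calH(a,b)=k$ must be re-checked, and (ii) strictly increases $|\calE'|$ unless $\calH$ already has the corresponding feature of the target structure. Since $\calH$ is optimal, each rewrite must be vacuous, and the three vacuity conditions together force the stated form. (We may take $\calH$ connected: if not, one can add edges to connect it while keeping $d_\calH(a,b)=k$, contradicting optimality.)

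\emph{Rewrite 1: saturate the BFS layering from $a$.} Put $V_i=\{v:d_\calH(a,v)=i\}$, so $V_0=\{a\}$, $b\in V_k$, and $V_0,\dots,V_p$ are all nonempty, where $p$ is the eccentricity of $a$ in $\calH$. Every edge of $\calH$ joins vertices in equal or consecutive layers, so we may add all missing within-layer and consecutive-layer edges; since any walk then changes the layer index by at most one per step, no distance $d_\calH(a,v)$ can decrease, so in particular $d_\calH(a,b)=k$ still holds. The result is exactly the clique chain $\calG_p(1,n_1,\dots,n_p)$ with $n_i=|V_i|$, so optimality forces $\calH$ to have been this clique chain, with $n_0=1$ and $\sum_i n_i=|\calV|$.

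\emph{Rewrite 2: collapse the tail.} If $p>k$, replace $V_k\cup\cdots\cup V_p$ by a single clique joined completely to $V_{k-1}$. This is edge-only, it adds at least the $n_{k-1}n_{k+1}>0$ new edges between $V_{k-1}$ and $V_{k+1}$, and the new clique still meets $V_0,\dots,V_{k-1}$ only through $V_{k-1}$, so every collapsed vertex stays at distance exactly $k$ from $a$ and $d_\calH(a,b)=k$. Optimality forces $p=k$, i.e.\ $\calH=\calG_k(1,n_1,\dots,n_{k-1},n_k)$ with $b$ in the last clique. \emph{Rewrite 3: pull the last clique inward.} Set $V_{k-1}'=V_{k-1}\cup(V_k\setminus\{b\})$ and $V_k'=\{b\}$ and take the clique chain on the new layers. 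Only the vertices of $V_k\setminus\{b\}$ change layer (from $k$ to $k-1$), and the only non-edges of $\calH$ that the new clique chain turns into edges are the pairs between $V_k\setminus\{b\}$ and $V_{k-2}$, while no edge of $\calH$ becomes a non-edge; hence this rewrite adds exactly $n_{k-2}(n_k-1)$ edges, deletes none, and all $k+1$ layers stay nonempty so $d_\calH(a,b)=k$. Since $k>1$ gives $n_{k-2}\ge 1$, the rewrite is a strict improvement whenever $n_k>1$; optimality thus forces $n_k=1$, which together with $n_0=1$ yields $\calH=\calG_k(n_0=1,n_1,\dots,n_{k-1},n_k=1)$ with $\sum_i n_i=|\calV|$, as claimed.

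I expect the delicate point to be Rewrite 3, specifically the assertion that it is genuinely edge-only: a priori, relocating vertices between layers could destroy edges of $\calG$ and break $\calE\subseteq\calE'$. What rescues it is that the relocated vertices are all but one of the \emph{endpoint} clique $V_k$: in the clique chain $\calH$ each of them is adjacent only to $V_{k-1}\cup V_k$, and after the shift those neighbours sit in layers $k-1$ and $k$ of the new layering, still within distance one of the relocated vertex's new layer $k-1$, so no adjacency is lost. The remaining work is routine bookkeeping --- verifying that each non-vacuous rewrite strictly increases $|\calE'|$ --- which in every case reduces to exhibiting a single newly created edge.
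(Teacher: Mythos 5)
Your proof is correct and ultimately rests on the same mechanism as the paper's: fix an optimal solution, partition the vertices by their distance from $a$, observe that edges can only join equal or consecutive levels, and invoke optimality to force the complete clique chain on those levels. The genuine difference lies in how the shape of the level partition is pinned down. The paper first establishes that every vertex $v$ of the optimal graph satisfies $d(a,v)+d(v,b)=k$, after which the $k+1$ levels with singleton end levels are immediate; its justification of that identity, however, is the bare assertion that a violating vertex admits an edge to some node at distance two ``which contradicts optimality,'' without verifying that such an edge keeps $d(a,b)=k$ (it does: any $a$--$b$ path through the new edge $\{v,u\}$ has length at least $d(a,v)+d(v,b)-1\ge k$, and that is exactly where the hypothesis on $v$ is needed). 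You never prove this identity. Instead, your Rewrite 1 forces the clique-chain structure directly on the raw BFS layering from $a$, and Rewrites 2 and 3 are fully checked exchanges that eliminate precisely the vertices for which the identity would fail --- those at distance greater than $k$ from $a$, and those at distance exactly $k$ other than $b$. Your route costs more bookkeeping (and the connectivity reduction deserves the one-line observation that a single bridge into another component cannot shorten any $a$--$b$ path), but every exchange is verified to be edge-only, distance-preserving, and strictly improving, so it closes the one step the paper leaves unargued. Both arguments in fact yield the stronger conclusion that \emph{every} optimal solution has the stated form.
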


\emph{Proof:} Let $\cal{G}^*=(\calV,\calE^*)$ be an optimal solution to the DPEA problem for a given $\cal{G}=(\calV,\calE)$, and $a,b\in\calV$. Also let $d_{\cal{G}}(a,b) = k > 1$, and note that for every $v\in \calV$ the following holds

\begin{equation}
d_{\cal{G}^*}(a,v) + d_{\cal{G}^*}(v,b)  = k.
\label{eq:distance_sum}
\end{equation}

Clearly the sum $d_{\cal{G}^*}(a,v) + d_{\cal{G}^*}(v,b)$ can not be less than $k$ since $\cal{G}^*$ preserves the distance between $a,b$. If it is more than $k$ then at least one edge can be added between $v$ and a node that is at a distance two from $v$, which contradicts the optimality of $\cal{G}^*$. For $0\le i\le k$ let's define levels as
$$ 
S_i := \{ v\in\calV: d_{\cal{G}^*}(a,v) = i  \},
$$
that is, nodes in $S_i$ are at distance $i$ from $a$; also they are at distance $k-i$ from $b$. This defines a fixed partition of all nodes. Observe that for a pair of nodes $x\in S_i$ and $y\in S_j$ where $j-i>1$, we can not have an edge between $x$ and $y$ in $\cal{G}^*$ as otherwise it will violate \eqref{eq:distance_sum}. A clique chain defined on levels $S_0,S_1\ldots,S_k$, that is $\calG_k(|S_0|,|S_1|,\cdots,|S_{k-1}|,|S_k|)$ contains all remaining edges, where $|S_0|=|S_k|=1$, which concludes our proof. \qed

Note that when $d_{\mathcal{G}}(a,b)=1$, optimal solution to the DPEA problem is trivially a complete graph. Next, we construct a clique chain for a given pair of nodes $a,b$. Note that the original graph $\calG$ must be a subgraph of this clique chain.
\subsection{Clique Chain Construction}
To construct a clique chain, we define $S_i^a$ to be the set of nodes that are at a distance $i$ from $a$ and similarly define $S_i^b$. If $d_\calG(a,b) = k$ is odd, then the sets $S_i^v$ where $i\in\lbrace{0,1,\cdots,{\lfloor{k/2}\rfloor}\rbrace}$ and $v\in\{a,b\}$ are all non-empty and are pairwise disjoint. Similarly, for even $k$, the sets $S_i^a$ where  $i\in\lbrace{0,1, \cdots, k/2\rbrace}$ and $S_i^b$ where $i\in\lbrace{0,1,\cdots, \frac{k}{2}- 1\rbrace}$ are all pairwise disjoint and non-empty. Next, we define free and fixed nodes as below:

\begin{definition}{(\emph{Fixed and free nodes)}}
\label{def:fixed}
For a given pair of nodes $a,b\in\calV$, all such nodes that are included in some \emph{shortest} path between $a$ and $b$ are referred to as \emph{fixed} nodes, while the remaining nodes are called \emph{free} nodes.
\end{definition}

We note that every fixed node must lie in $S_i^v$, where $v\in~\{a,b\}$. 
However, there could be free nodes that might not be included in any $S_i^a$ or $S_i^b$. For instance, if $k$ is even, consider a node $x$ with $d_\calG(x,a)> k/2$ and $d_\calG(x,b) > \frac{k}{2}-1$; and if $k$ is odd, consider $x$ such that $d_\calG(x,a)> \lfloor k/2\rfloor$ and $d_\calG(x,b) > \lfloor k/2\rfloor$. We can always include such a free node $x$ into the set $S_{\lfloor k/2 \rfloor}^a$ by adding an edge between $x$ and $y$ for some $y \in S_{\lfloor k/2 \rfloor - 1}^a$ while preserving the distance $k$ between nodes $a$ and $b$. Furthermore, if a free node is already included in some $S_i^a$ (or $S_i^b$), it might be possible to place it in some $S_j^a$ (or $S_j^b$) for some $j\ne i$ by creating an edge between $x$ and some $y\in S_{j-1}^a$ (or $y\in S_{j-1}^b$) as long as the distance between $a$ and $b$ is maintained. However, if $x$ is a fixed node in some $S_i^a$ (or $S_i^b$), then it can never be placed in $S_j^a$ (or $S_j^b$) for any $j\ne i$ without changing the distance between $a$ and $b$. Moreover, each of the $S_i^a$ and $S_i^b$ contains at least one fixed node. As a result, for given $a$ and $b$ in $\calG=(\calV,\calE)$, we always have a partition of $\calV$ into $k+1$ subsets given by,

\begin{equation}
\label{eq:odd}
\calS = \lbrace{S_0^a,S_1^a,\cdots,S_{\lfloor \frac{k}{2} \rfloor}^a,S_{\lfloor \frac{k}{2}\rfloor}^b, \cdots, S_1^{b},S_0^b\rbrace},\;\;\;\text{(odd } k);
\end{equation}
and
\begin{equation}
\label{eq:even}
\calS = \lbrace{S_0^a,S_1^a,\cdots,S_{k/2}^a,S_{\frac{k}{2} - 1}^b, \cdots, S_1^b,S_0^b\rbrace}\;\;\;\text{(even } k), 
\end{equation}
where $S_0^a = \{a\}$ and $S_0^b = \{b\}$.

Figures \ref{fig:ideas}(a)--(c) illustrate above notions. Next, we induce a clique chain over subsets in $\calS$, that is $\calG_k(1,|S^a_1|,|S^a_2|,\cdots,|S^b_2|,|S^b_1|,1)$. The distance between $a$ and $b$ in this clique chain is $k$, and it contains the original graph $\calG$ as a subgraph. We illustrate this in Figure \ref{fig:ideas}(d).

\begin{figure*}
\centering
\includegraphics[scale=0.8]{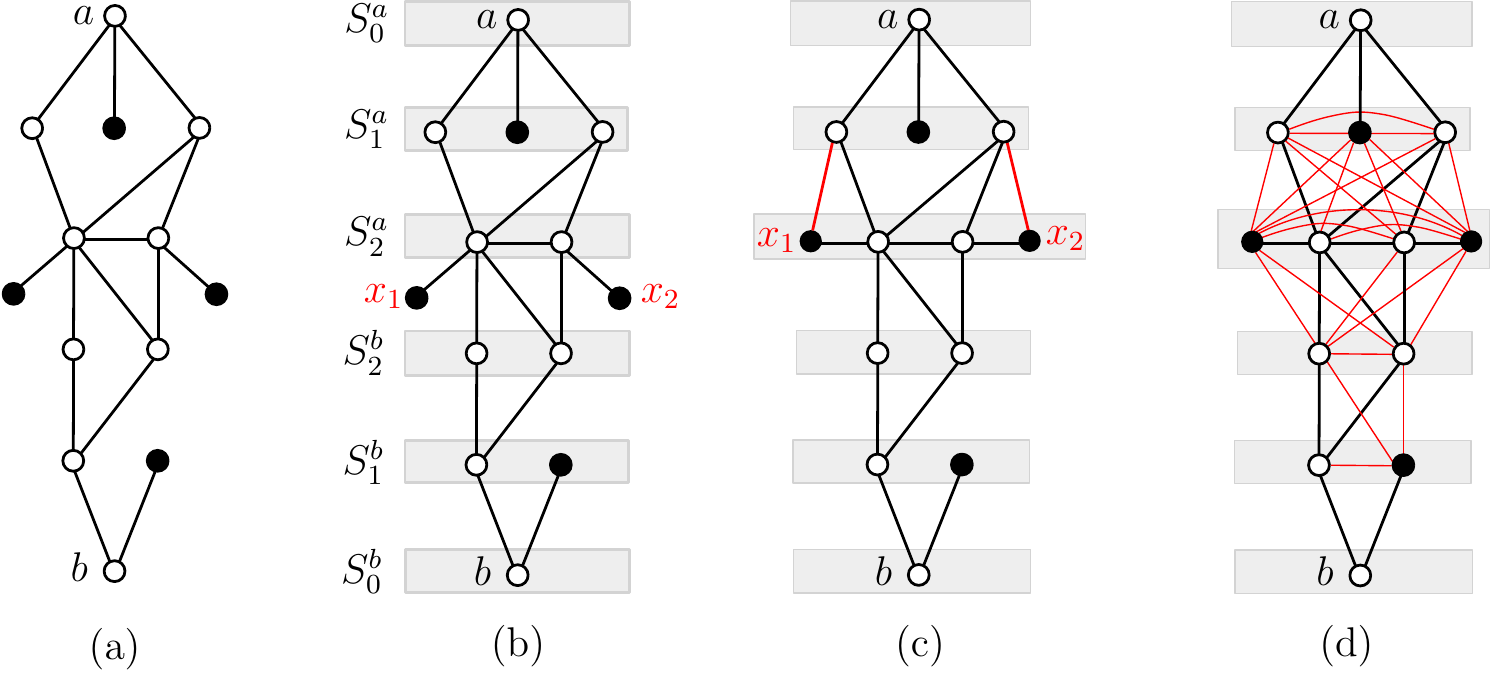}
\caption{All white nodes are \emph{fixed} nodes as each of them lie on some shortest path between $a$ and $b$, whereas black nodes do not lie on any shortest path between $a$ and $b$ and are \emph{free} nodes. (b) Nodes are partitioned into sets $S_i^v$ where $v\in\{a,b\}$ and $i\in\{0,1,2\}$. Note that free nodes $x_1$ and $x_2$ are not included in any $S_i^a$ or $S_i^b$. (c) Both $x_1$ and $x_2$ are included in $S_2^a$ by creating an edge between $x_1$ and some node in $S_1^a$, and similarly by an edge between $x_2$ and a node in $S_1^a$. The distance between $a$ and $b$ does not change by the addition of these edges. (d) Clique chain is induced over all $S_i^a$ and $S_i^b$.}
\label{fig:ideas}
\end{figure*}

As for the time complexity of constructing such a clique chain, a breadth-first search (BFS) tree starting at $a$ (similarly starting at $b$) gives us distances $d_\calG(a,v)$ (similarly $d_\calG(b,v)$) for all vertices $v$. Once these two distances are available for each node $v$, we can determine the subset in $\calS$ to which $v$ belongs to. Further, we can check if the node is free or fixed by verifying the equation:  $d_\calG(a,b)=d_\calG(a,v)+d_\calG(b,v)$. Consequently, for  a given graph $\calG=(\calV,\calE)$ and a pair of nodes $a,b$, we can construct the desired clique chain in $O(|\calV|+|\calE|)$ time.

\section{Edge Augmentation While Preserving a Lower Bound on SSC}
\label{sec:DEPA_SSC}
In this section, we present two algorithms to add maximal edges in the existing graph while preserving a (PMI based) lower bound on the dimension of SSC. For a graph $\mathcal{G}$, our algorithms take PMI sequence of distance-to-leader vectors $\mathcal{D}$ as input and return edges whose addition to $\mathcal{G}$ does not change distance-to-leader vectors of nodes included in $\calD$. As a result, $\calD$ is also a PMI sequence of the graph even after adding extra edges.
\subsection{Intersection Algorithm}
If $\calD$ is a PMI sequence and $\bar\calV\subseteq\cal{V}$ is the set of nodes whose  distance-to-leader vectors are included in $\calD$, then our goal is to add edges that do not change the distance between nodes in the node pair $(v,\ell)$, where $v\in\bar{\calV}$ and $\ell\in\calV_\ell$. One way of achieving this is to solve DPEA problem (as discussed in Section \ref{sec:DPEA}) for each such node pair, and then select edges that are common in solutions of all such DPEA instances. These common edges, if added to the graph, will preserve the distance between any leader node and $v\in\bar{\calV}$. 
We summarize this scheme in Algorithm \ref{algo:DPEA}.

%
%

\begin{algorithm}
\caption{Intersection Algorithm}
\label{algo:DPEA}
\begin{algorithmic}[1]
\State Consider a PMI sequence $\calD$ in $\calG$ with leaders in $\calV_\ell$.
\State Identify nodes whose distance-to-leader vectors are included in $\calD$, and denote them by $\bar{\calV}\subseteq\calV$.
\State For each node pair $(\ell,v)$ where $\ell\in\calV_\ell$ and $v\in\bar{\calV}$, solve the DPEA problem. Let $\calE_{\ell,v}$ be a solution.
\State Compute 
\begin{equation}
\label{eq:final_e}
\calE' = \bigcap\limits_{\ell\in\calV_\ell;\;v\in\bar{\calV}}\calE_{\ell,v}.
\end{equation}
\end{algorithmic}
\end{algorithm}

\begin{prop}
If $\delta$ is a distance-based lower bound on the dimension of SSC of $\calG=(\calV,\calE)$ with leaders $\calV_\ell\subset\calV$, then $\delta$ is also a lower bound on the dimension of SSC of a graph $\calH=(\calV,\calE')$, where $\calE'$ is given in \eqref{eq:final_e}.
\end{prop}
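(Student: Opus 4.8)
The plan is to verify that the edge set $\calE'$ produced by the Intersection Algorithm satisfies two things: first, that $\calE\subseteq\calE'$, so that $\calH$ is indeed a supergraph of $\calG$ on the same vertex set; and second, that for every leader $\ell\in\calV_\ell$ and every node $v\in\bar\calV$, the distance is preserved, i.e.\ $d_\calH(\ell,v) = d_\calG(\ell,v)$. Once these two facts are in place, the claim follows immediately by invoking Theorem~\ref{thm:PMI}: since $\calD$ is a longest PMI sequence in $\calG$ of length $\delta$ and the PMI property (Definition~\ref{def:PMI}) depends only on the distance-to-leader vectors $D_v$ for $v\in\bar\calV$, which are unchanged in $\calH$, the same sequence $\calD$ is still a valid PMI sequence in $\calH$. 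Hence the longest PMI sequence in $\calH$ has length at least $\delta$, and Theorem~\ref{thm:PMI} gives that the dimension of SSC of $\calH$ is at least $\delta$.

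For the first fact, I would note that each solution $\calE_{\ell,v}$ to a DPEA instance is, by definition of the DPEA problem, an edge set containing $\calE$ (the DPEA problem requires $\calE\subseteq\calE'$). Therefore $\calE$ is contained in the intersection $\calE' = \bigcap_{\ell,v}\calE_{\ell,v}$. For the second fact, fix a pair $(\ell,v)$. Since $\calE'\subseteq\calE_{\ell,v}$, adding edges from $\calE'\setminus\calE$ to $\calG$ can only create a subgraph of the DPEA solution graph $(\calV,\calE_{\ell,v})$; adding edges never increases distances, so $d_\calH(\ell,v)\le d_{(\calV,\calE_{\ell,v})}(\ell,v) = d_\calG(\ell,v)$, where the last equality is the defining property of the DPEA solution. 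Combined with $d_\calH(\ell,v)\ge d_\calG(\ell,v)$ (which also holds because $\calE\subseteq\calE'$ means $\calG$ is a subgraph of $\calH$ and... wait, that inequality goes the wrong way), I instead argue the lower bound directly from $\calE'\subseteq\calE_{\ell,v}$ as above, and the upper bound $d_\calH(\ell,v)\ge d_\calG(\ell,v)$ needs the monotonicity in the other direction — so I would simply use: $\calG$ is a subgraph of $\calH$ hence $d_\calH\le d_\calG$ is false in general; rather, any path in $\calH$ that is shorter than $d_\calG(\ell,v)$ would also be a path in $(\calV,\calE_{\ell,v})$ since $\calE'\subseteq\calE_{\ell,v}$, contradicting $d_{(\calV,\calE_{\ell,v})}(\ell,v)=d_\calG(\ell,v)$; and no path in $\calH$ can be longer-shortest than in $\calG$ would require... the clean statement is just $d_\calH(\ell,v)=d_\calG(\ell,v)$ forced from both sides by sandwiching between $\calG$ and $(\calV,\calE_{\ell,v})$, both of which realize distance exactly $d_\calG(\ell,v)$ for this pair.

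The only subtle point — and the step I expect to require the most care — is ensuring the argument that distance-to-leader vectors are genuinely unchanged, and in particular that $\bar\calV$ contains exactly the nodes whose vectors appear in $\calD$, so that preserving distances for the pairs $(\ell,v)$ with $v\in\bar\calV$ and $\ell\in\calV_\ell$ suffices to preserve the entire vector $D_v$ for each relevant $v$. Since $D_v$ is the vector whose $j$-th entry is $d_\calG(\ell_j,v)$, preserving $d(\ell_j,v)$ for all $j$ and all $v\in\bar\calV$ is exactly what is needed, and the Intersection Algorithm handles all such pairs. I would also remark that when $d_\calG(\ell,v)=1$ the corresponding DPEA solution is the complete graph (as noted after Theorem~\ref{thm:clique_chain_optimal}), and when $d_\calG(\ell,v)=0$ (i.e.\ $v=\ell$) the pair is trivial; these degenerate cases are consistent with the intersection formula and do not affect the argument. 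This completes the proof sketch; it is essentially a bookkeeping argument once Theorem~\ref{thm:PMI} is granted.
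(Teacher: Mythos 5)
Your proof is correct and follows the same route as the paper: show that $\calE\subseteq\calE'$ and that distances between leaders and nodes in $\bar\calV$ are unchanged, so $\calD$ remains a PMI sequence in $\calH$ and Theorem~\ref{thm:PMI} applies. The sandwich argument $\calG\subseteq\calH\subseteq(\calV,\calE_{\ell,v})$, with both outer graphs realizing distance exactly $d_\calG(\ell,v)$, is the right way to justify the distance-preservation step that the paper only asserts, though your write-up of that step should be cleaned of the false starts before it is presentable.
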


\emph{Proof} Let $\calD$ be a PMI sequence of length $\delta$ in $\calG=(\calV,\calE)$ containing distance-to-leader vectors of nodes in $\bar{\calV}\subseteq\calV$. Using the above scheme, we compute $\calE'$ and obtain a graph $\calH=(\calV,\calE')$. Note that the distances between leaders and nodes in $\bar{\calV}$ is exactly same in $\calG$ and $\calH$. Consequently, $\calD$ is also a PMI sequence in $\calH$, and hence $\delta$ is also a lower bound on the dimension of SSC of $\calH$. \qed



Since there are at most $|\calV_\ell|\times |\calD|$ instances of the DPEA problem, and each such instance takes $O(|\calV|+|\calE|)$ time, Algorithm \ref{algo:DPEA} runs in $O(|\calV_\ell|\times |\calD|\times (|\calV|+|\calE|))$ time.

\subsubsection*{Example}
As an example, consider the network in Figure~\ref{fig:edge_addition}(a) with $|\calV|=15$ nodes and $|\calE|=23$ edges. We consider the cases with one and two leaders as illustrated in Figures \ref{fig:edge_addition}(b) and \ref{fig:edge_addition}(c) respectively. With a single leader ($\calV_\ell = \{v_1\}$), the dimension of SSC is at least 6, as the maximum length of a PMI sequence is 6 as given below. We can add 41 extra edges (shown in red in Figure \ref{fig:edge_addition}(b)), while ensuring that the distance between the leader node $v_1$ and each of the node in $\bar{\calV} = \{v_2,v_3,v_4,v_5,v_6\}$ is preserved in the new graph. 

\begin{equation*}
\calD = 
\label{eq:one_lead}
\begin{blockarray}{ccccccc}
 v_1 & v_5 & v_4 & v_3 & v_2 & v_6 \\
\begin{block}{[ccccccc]}
 0 & 1 & 2 & 3 & 4 & 5 \\
\end{block}
\end{blockarray}
\end{equation*}

Similarly, with two leaders $\calV_\ell = \{ v_1, v_4\}$, the dimension of SSC is at least 10. A PMI sequence of length 10 is given below along with the nodes whose distance-to-leader vectors are included in the sequence. 

\begin{equation*}
\calD = 
\label{eq:one_lead}
\begin{blockarray}{ccccccccccc}
 v_1 & v_4 & v_5 & v_{9} & v_7 & v_{15} & v_{3} & v_{11} & v_{2} & v_6\\
\begin{block}{[ccccccccccc]}
   \textcircled{0}   &   2  &     \textcircled{1}   &    2   &    \textcircled{2}   &   3  &    \textcircled{3}   &   4  &    \textcircled{4}   &    \textcircled{5}  \\
 2  &    \textcircled{0}  &    1   &   \textcircled{1}     & 2   &   \textcircled{2}    &  3  &    \textcircled{3}   &  4     & 5\\
\end{block}
\end{blockarray}
\end{equation*}
We can add 21 extra edges in the graph while ensuring that above sequence is still a PMI sequence in the new graph.
\begin{figure}[htb]
\centering
    \begin{subfigure}[b]{0.16\textwidth}
	\centering
	\includegraphics[scale=0.27]{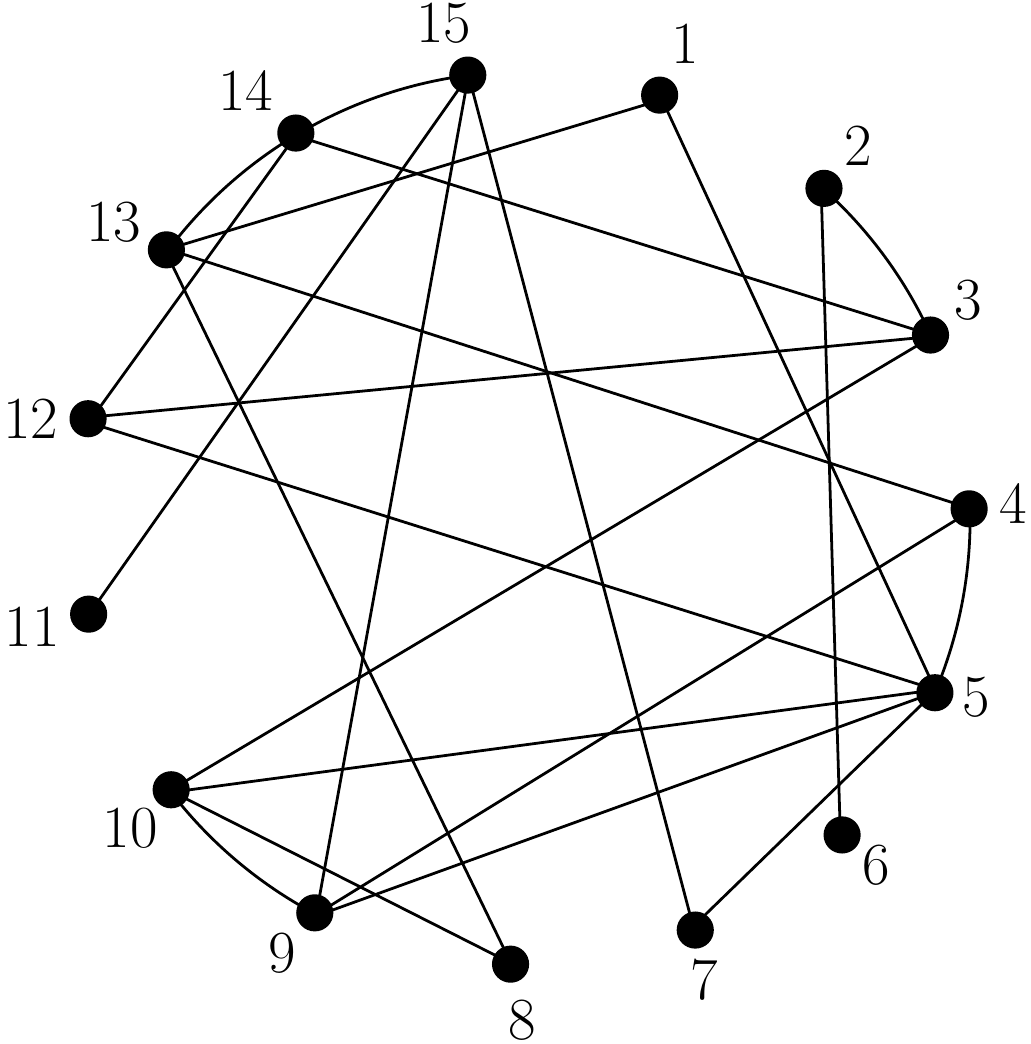}
	\caption{}
	\end{subfigure}
	\begin{subfigure}[b]{0.16\textwidth}
	\centering
	\includegraphics[scale=0.27]{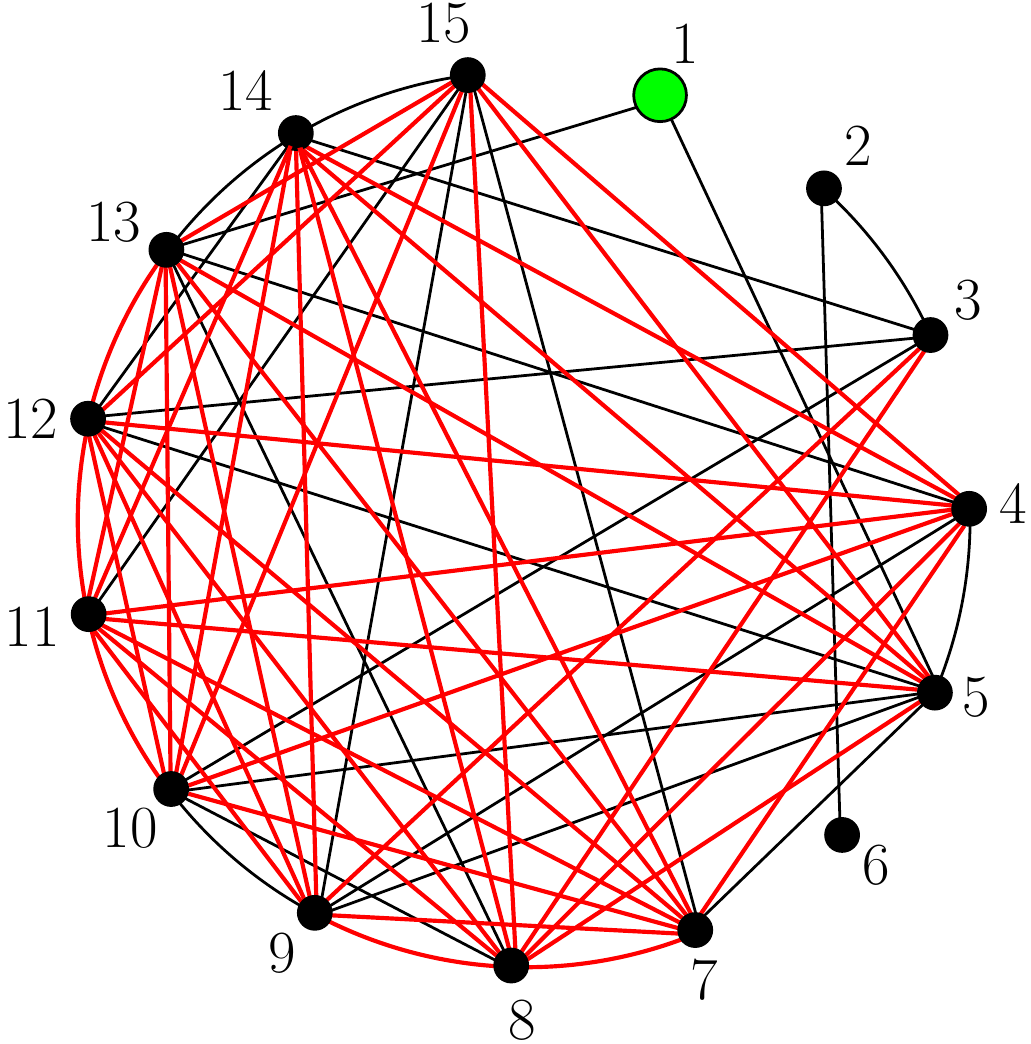}
	\caption{$\calV_\ell = \{v_1\}$}
	\end{subfigure}
	\begin{subfigure}[b]{0.15\textwidth}
	\centering
	\includegraphics[scale=0.27]{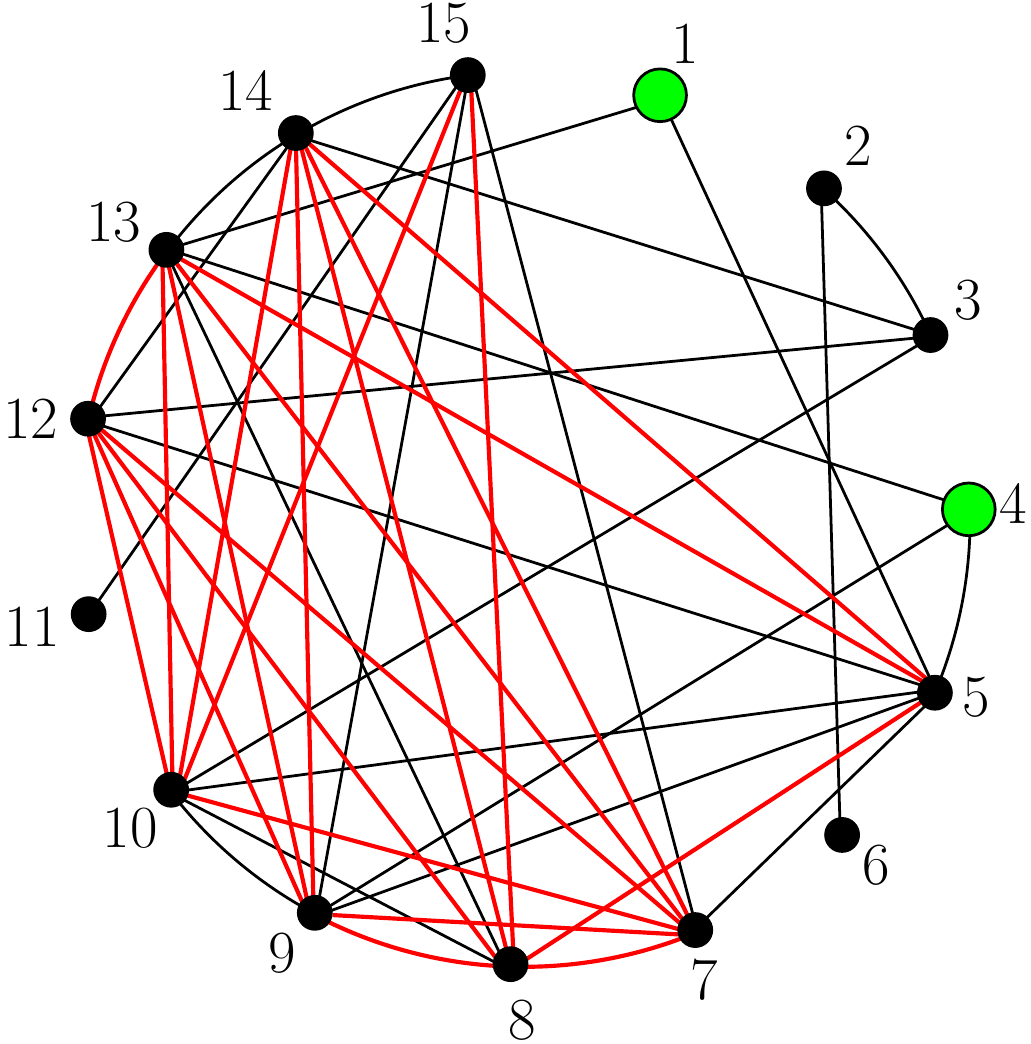}
	\caption{$\calV_\ell = \{v_1,v_4\}$}
	\end{subfigure}
\caption{Extra edges that can be added to the original graph with one and two leaders while preserving a lower bound on the dimension of SSC. Edges that are added are colored red.}
\label{fig:edge_addition}
\end{figure}

\subsection{Randomized Algorithm}
\label{sec:RA_algo}
Next, we present a randomized algorithm to add edges while preserving distances between certain node pairs in a graph. Let $\calE^c$ be the set of edges not included in the original graph $\calG=(\calV,\calE)$. In other words, $\calE^c\cup\calE$ induces a complete graph. The main idea is to randomly select an edge in $\calE^c$ and add it to the existing graph if its addition does not decreases distances between nodes in the desired node pairs. We outline the algorithm below. As previously, $\calD$ is a PMI sequence and $\bar{\calV}\subseteq\calV$ is the set of nodes whose distance-to-leader vectors are included in $\calD$.

\begin{algorithm}
\caption{Randomized Algorithm}
\label{algo:RA}
\begin{algorithmic}[1]
\State \textbf{Given} $\mathcal{G}=(\calV,\calE)$, $\mathcal{V}_\ell$, $\mathcal{D}$, $\bar{\calV}$
\State \textbf{Initialize} $\calE' \gets \calE$
\State Compute $\calE^c$ .
\State \textbf{While} $\calE^c\ne \emptyset$
\State Randomly select $e\in\calE^c$, and obtain $\calH=(\calV,\calE'\cup\{e\})$.
\State Compute $d_\calH(\ell,v)$ for all $\ell\in\calV_\ell$ and for all $v\in\bar{\calV}$.  
\State If $(d_\calH(\ell,v) = d_\calG(\ell,v))$ for all $\ell\in\calV_\ell,$ and for all $v\in\bar{\calV}$, then
$$
\calE' \gets \calE' \cup\{e\}.
$$
\State Update $\calE^c\gets \calE^c\setminus\{e\}$.
\State  \textbf{End While}
\State \textbf{Return} $\calE'$
\end{algorithmic}
\end{algorithm}
\subsubsection{Analysis}
In this subsection we analyze the performance of the randomized algorithm. 
Let $T\le|\calE^c|$ be the number of edges that are (individually) legal to add to the input graph and let $\tau \le  T$ be the size of an (unknown) optimal solution that is, the maximum number of edges from the legal set that can actually be added to the graph. Algorithm \ref{algo:RA} randomly picks $\tau' \le \tau$ legal edges, one at a time, to add to the graph. {Next, we show that if the randomized algorithm is repeated $c$ times, then with a certain probability we get a solution that is within a factor of $\alpha < 1$ of the optimal solution.} 
\begin{prop}
\label{lemma:randomized_analysis}
Algorithm~\ref{algo:RA} returns an {$\alpha$-approximate solution} with probability at least $\left(1 - e^{-c \left( \frac{\tau}{T}\right)^{{\alpha\tau}} }\right)$, when repeated $c$ times. 
\end{prop}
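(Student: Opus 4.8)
The plan is to reduce everything to a single execution of Algorithm~\ref{algo:RA} and then amplify over the $c$ independent repetitions. Let $p$ denote the probability that one run returns an edge set of size at least $\alpha\tau$, i.e.\ an $\alpha$-approximate solution. Since the $c$ runs are independent, the probability that at least one of them succeeds is $1-(1-p)^c \ge 1 - e^{-cp}$, using $1-p\le e^{-p}$. Hence it suffices to establish the per-run bound $p \ge (\tau/T)^{\alpha\tau}$, and the whole argument is about this one inequality.

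For the single-run analysis I would first record the monotonicity facts that make the greedy process well behaved. Adding edges to $\calG$ can only leave the distances $d_\calG(\ell,v)$ (for $\ell\in\calV_\ell$, $v\in\bar\calV$) unchanged or strictly smaller, so an edge set $\calF$ is \emph{legal} (its addition to $\calG$ keeps all these distances equal to the original ones) if and only if none of them strictly decreases, and if $\calF$ is legal then so is every subset $\calF'\subseteq\calF$: a subgraph cannot have smaller distances, while adding edges to $\calG$ cannot make them larger, so they are forced to stay equal. In particular the edges of any optimal legal set $\calF^\star$ with $|\calF^\star|=\tau$ all lie among the $T$ individually-legal edges, and, crucially, during any run in which all edges examined so far belong to $\calF^\star$, each such edge is accepted: when it is examined the current graph is $\calG$ together with a subset of $\calF^\star$, which is legal, and adding another $\calF^\star$-edge keeps us inside $\calF^\star$, hence legal.

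Next I would isolate a sufficient event for the run to add at least $\alpha\tau$ edges. Illegal edges are always rejected and never change the state, so only the (uniformly random) relative order in which the $T$ legal edges are examined matters; let $E$ be the event that the first $\alpha\tau$ legal edges examined all belong to a fixed optimal set $\calF^\star$. On $E$, by the previous paragraph all of these $\alpha\tau$ edges get accepted, so the run is $\alpha$-approximate and $p\ge\Pr[E]$. For a uniformly random ordering of the $T$ legal edges, $\Pr[E]=\prod_{i=0}^{\alpha\tau-1}\frac{\tau-i}{T-i}$; the final step is to lower bound this product by $(\tau/T)^{\alpha\tau}$ and substitute into $1-e^{-cp}$.

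The main obstacle is exactly this last estimate made rigorous at the claimed exponent: since $\tau\le T$, each factor $\frac{\tau-i}{T-i}$ is actually at most $\frac{\tau}{T}$, so one cannot bound the product factorwise by $(\tau/T)^{\alpha\tau}$ — the product equals $(\tau/T)^{\alpha\tau}$ only up to a correction factor. Reaching the stated bound therefore requires either a more careful accounting (e.g.\ estimating the $i$-th required good draw against the pool of edges still legal after $i$ acceptances, which has size at most $T$ while still containing at least $\tau-i$ edges of $\calF^\star$) or modeling the edge draws so that each of the $\alpha\tau$ successive good selections succeeds with probability at least $\tau/T$; pinning this bookkeeping down precisely is the delicate part. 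Everything else — the monotonicity/subset lemma, the reduction to the sufficient event $E$, and the amplification $1-(1-p)^c\ge 1-e^{-cp}$ over the $c$ independent runs — is routine.
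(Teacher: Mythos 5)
Your argument follows the same route as the paper's own proof: fix one optimal legal set $\calF^\star$ of size $\tau$, take as the success event that the first $\alpha\tau$ individually-legal edges examined all fall in $\calF^\star$, compute its probability as $\binom{\tau}{\alpha\tau}/\binom{T}{\alpha\tau}$, and amplify over the $c$ independent runs via $1-(1-p)^c \ge 1-e^{-cp}$. Your single-run bookkeeping (monotonicity of distances under edge addition, subset-closure of legality, and the observation that individually-illegal edges never change the state so only the relative order of the $T$ legal edges matters) is in fact more careful than what the paper writes down.

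The step you flag as ``the delicate part'' is not a gap you failed to close; it is a genuine error in the paper's own proof. The paper asserts $\binom{\tau}{\alpha\tau}/\binom{T}{\alpha\tau} \ge (\tau/T)^{\alpha\tau}$ without justification, and the inequality actually runs the other way: since $\tau \le T$, each factor satisfies $\frac{\tau-i}{T-i} \le \frac{\tau}{T}$, so $\binom{\tau}{\alpha\tau}/\binom{T}{\alpha\tau} = \prod_{i=0}^{\alpha\tau-1}\frac{\tau-i}{T-i} \le \left(\frac{\tau}{T}\right)^{\alpha\tau}$. A concrete counterexample is $T=4$, $\tau=2$, $\alpha=1$, where $\binom{2}{2}/\binom{4}{2} = 1/6 < 1/4 = (\tau/T)^{\alpha\tau}$. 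Consequently neither your argument nor the paper's establishes the stated probability $1 - e^{-c(\tau/T)^{\alpha\tau}}$; what this line of reasoning honestly yields is the weaker guarantee $1 - e^{-c\binom{\tau}{\alpha\tau}/\binom{T}{\alpha\tau}}$, or, if one wants a closed form, $1-e^{-cq}$ with $q = \left(\frac{\tau-\alpha\tau}{T-\alpha\tau}\right)^{\alpha\tau}$, since the factors $\frac{\tau-i}{T-i}$ decrease in $i$. Recovering the exponent $(\tau/T)^{\alpha\tau}$ would require a different sufficient event --- for instance counting many distinct $\alpha\tau$-sized legal witnesses rather than only the subsets of one fixed optimum, which is exactly the slack the paper alludes to informally after the proposition but never actually uses.
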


\emph{Proof:} The probability of the event that algorithm picks a set of legal edges that is as big as the optimal set (that is when $\tau'=\tau$) is at least $\frac{1}{ {T\choose{\tau}}}$. Let $\mathbb{S}$ be the event that algorithm finds a set of legal edges that is at least a fixed constant {$\alpha$ times the size of optimal set}, then the probability of such an event can be bounded as below.

\begin{equation}
\label{eq:prob_new}
Prob(\mathbb{S}) \ge \frac{ {\tau\choose{{\alpha\tau}}} }{ {T\choose{{\alpha\tau}}} } \ge \left( \frac{\tau}{T}\right)^{{\alpha\tau}}.
\end{equation}

This implies that the probability of failure to find a good enough set can be bounded above as

\begin{equation}
\label{eq:prob_fail}
Prob( \overline{\mathbb{S}} ) \le 1 - \left( \frac{\tau}{T}\right)^{{\alpha\tau}}.
\end{equation}

Now if we repeat our algorithm $c$ times, the probability of success improves significantly. Owing to the independence of these trials, we get

\begin{equation}
\label{eq:prob_success}
Prob(\mathbb{S}) \ge 1 - \left(1 - \left( \frac{\tau}{T}\right)^{{\alpha\tau}}\right)^c \approx 1 - e^{- \left( \frac{\tau}{T}\right)^{{\alpha\tau}} c},
\end{equation} 
which is the desired result.\qed

We note that \eqref{eq:prob_success} provides a rather loose bound on the probability of success of finding an $\alpha$-approximate solution when we repeat randomized algorithm $c$ times. We expect many {suboptimal} solutions of required size to exist that are not subsets of one fixed optimal solution. In any case, this bound is still helpful in guessing a reasonable value of $c$. {For instance, if there are $T = 100$ individually legal edges,  $\tau$ is $0.92T$, and we are interested in a solution that is at least $\alpha = 3/4$ times the size of optimal solution, then setting $c>500$ will give us a good chance (success probability of at least $0.8$) of finding a solution of required performance guarantee.}

Regarding the runtime of Algorithm~\ref{algo:RA}, note that in each iteration we need to compute distances from each leader to all nodes in the PMI sequence. Since there are at most $|\calE|$ iterations, Algorithm~\ref{algo:RA} runs in $O(|\calE|\times |\calV_\ell|\times (|\calV|+|\calE|)  )$ time. 
\section{Numerical Evaluation}
\label{sec:evaluation}
\begin{figure*}[htb]
    \centering
    \begin{subfigure}[b]{0.22\textwidth}
	\centering
	\includegraphics[scale=0.27]{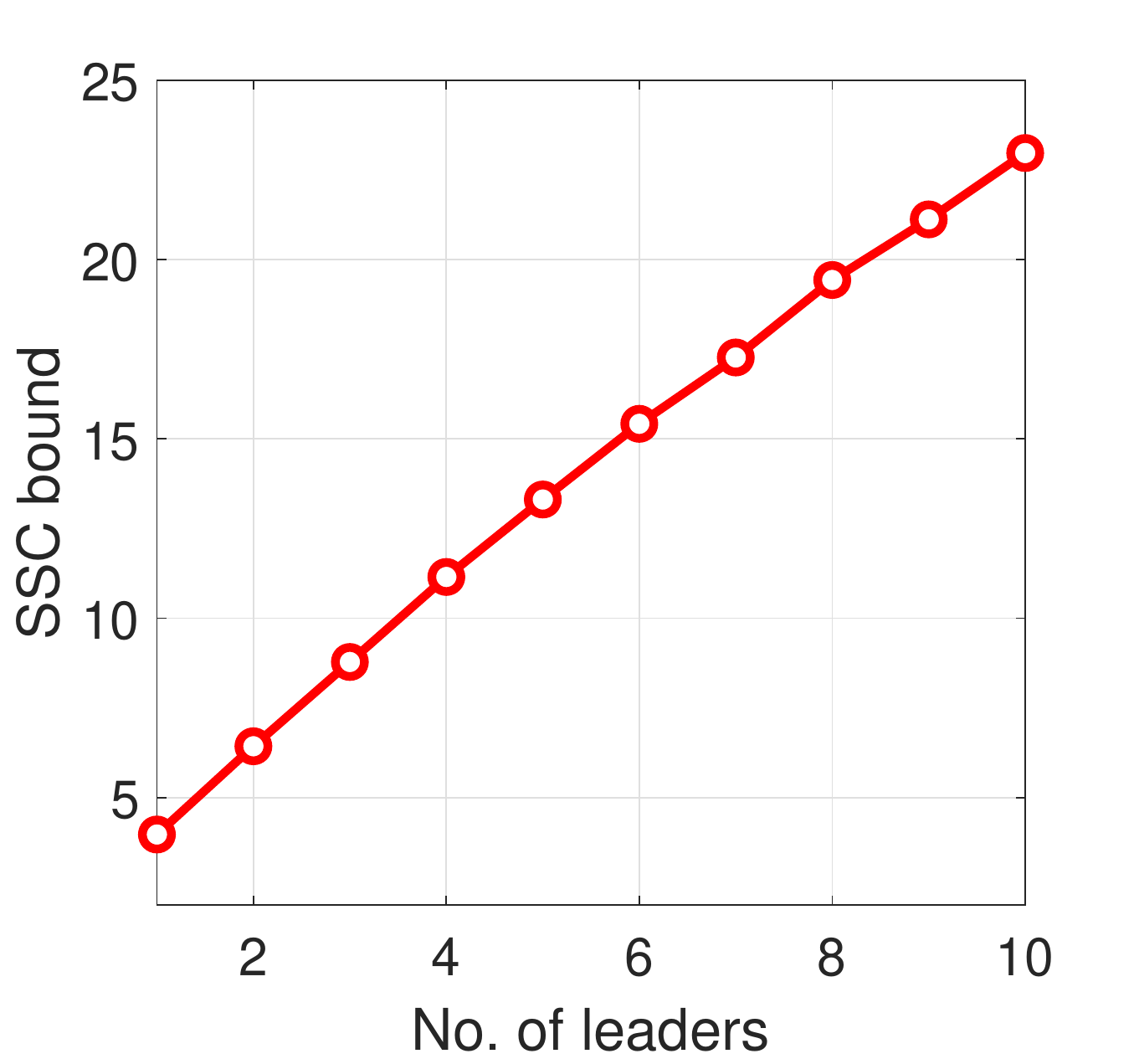}
	\caption{$p=0.2$}
	\end{subfigure}
	 \begin{subfigure}[b]{0.22\textwidth}
	\centering
	\includegraphics[scale=0.27]{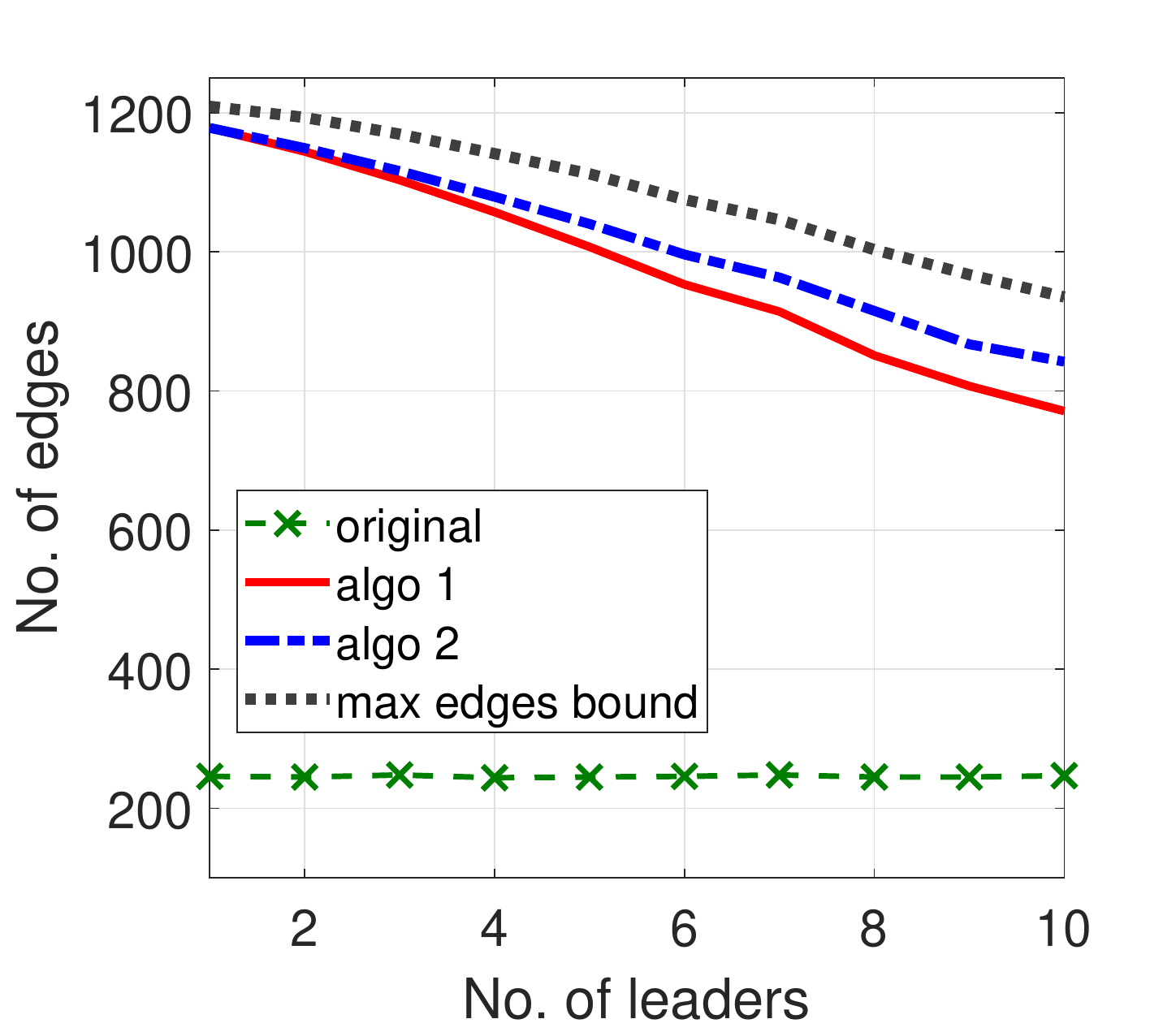}
	\caption{$p=0.2$}
	\end{subfigure}
	\begin{subfigure}[b]{0.22\textwidth}
	\centering
	\includegraphics[scale=0.27]{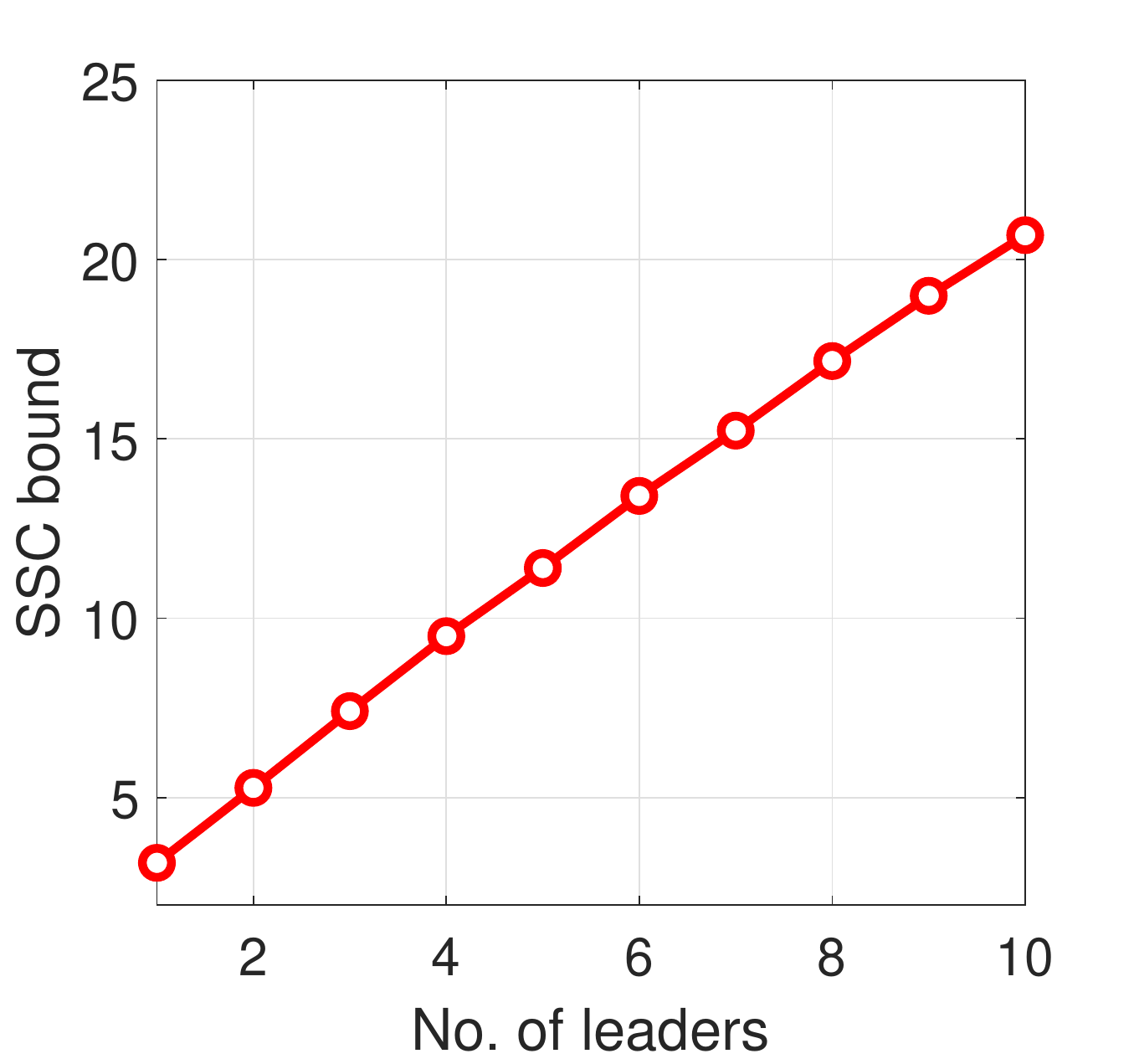}
	\caption{$p=0.3$}
	\end{subfigure}
	 \begin{subfigure}[b]{0.22\textwidth}
	\centering
	\includegraphics[scale=0.27]{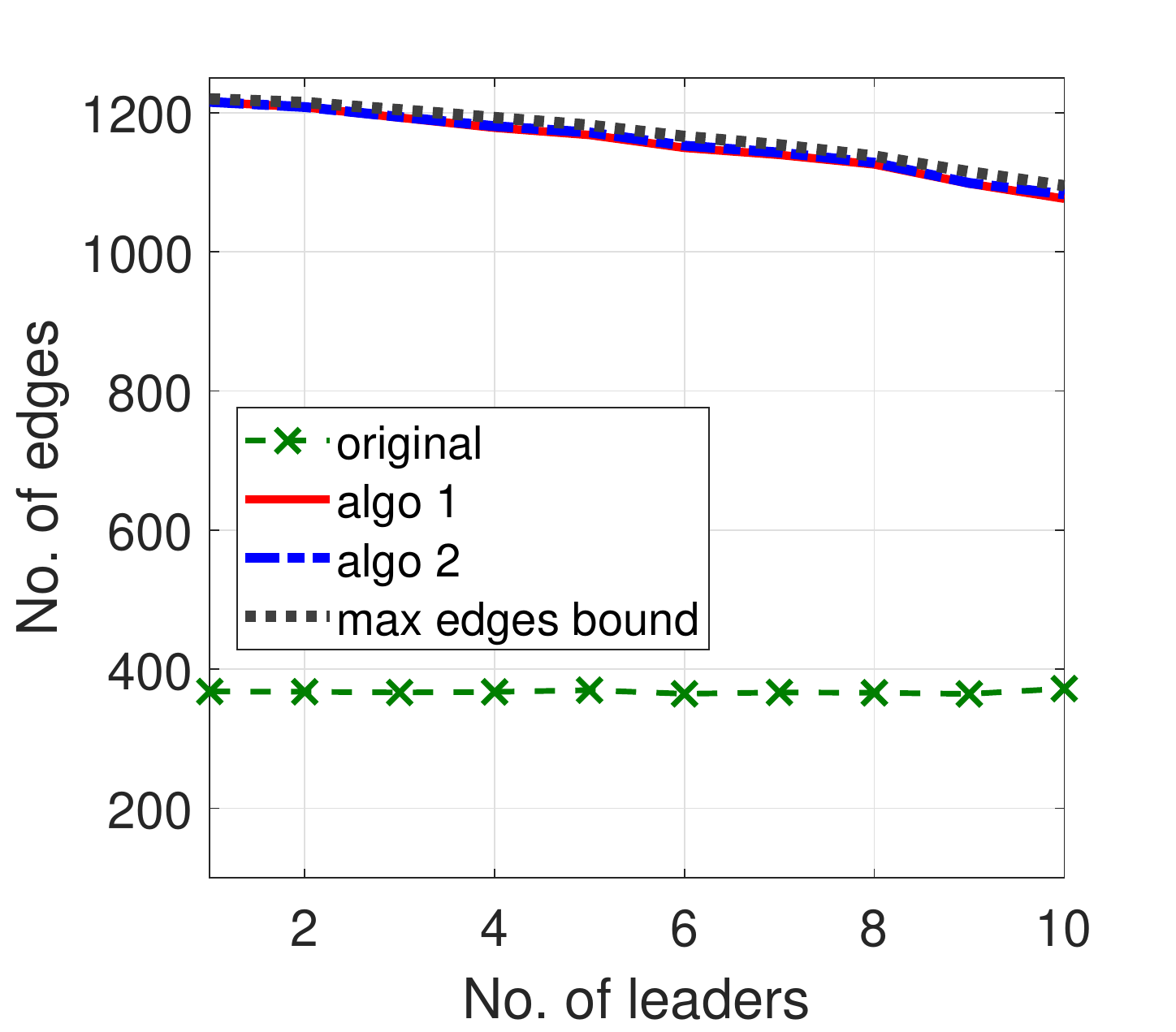}
	\caption{$p=0.3$}
	\end{subfigure}
    \caption{\emph{ER--networks:} For $p=0.2$ and $0.3$ respectively, (a) and (c) illustrate lower bound on the dimension of SSC as a function of number of leaders. In (b) and (d), Algorithms 1 and 2 are compared. 
    Edges in original graphs as well as upper bound on the maximum number of edges that can be added without changing PMI sequence are also shown.}
    \label{fig:ER_plots}
\end{figure*}

\begin{figure*}[htb]
    \centering
    \begin{subfigure}[b]{0.22\textwidth}
	\centering
	\includegraphics[scale=0.27]{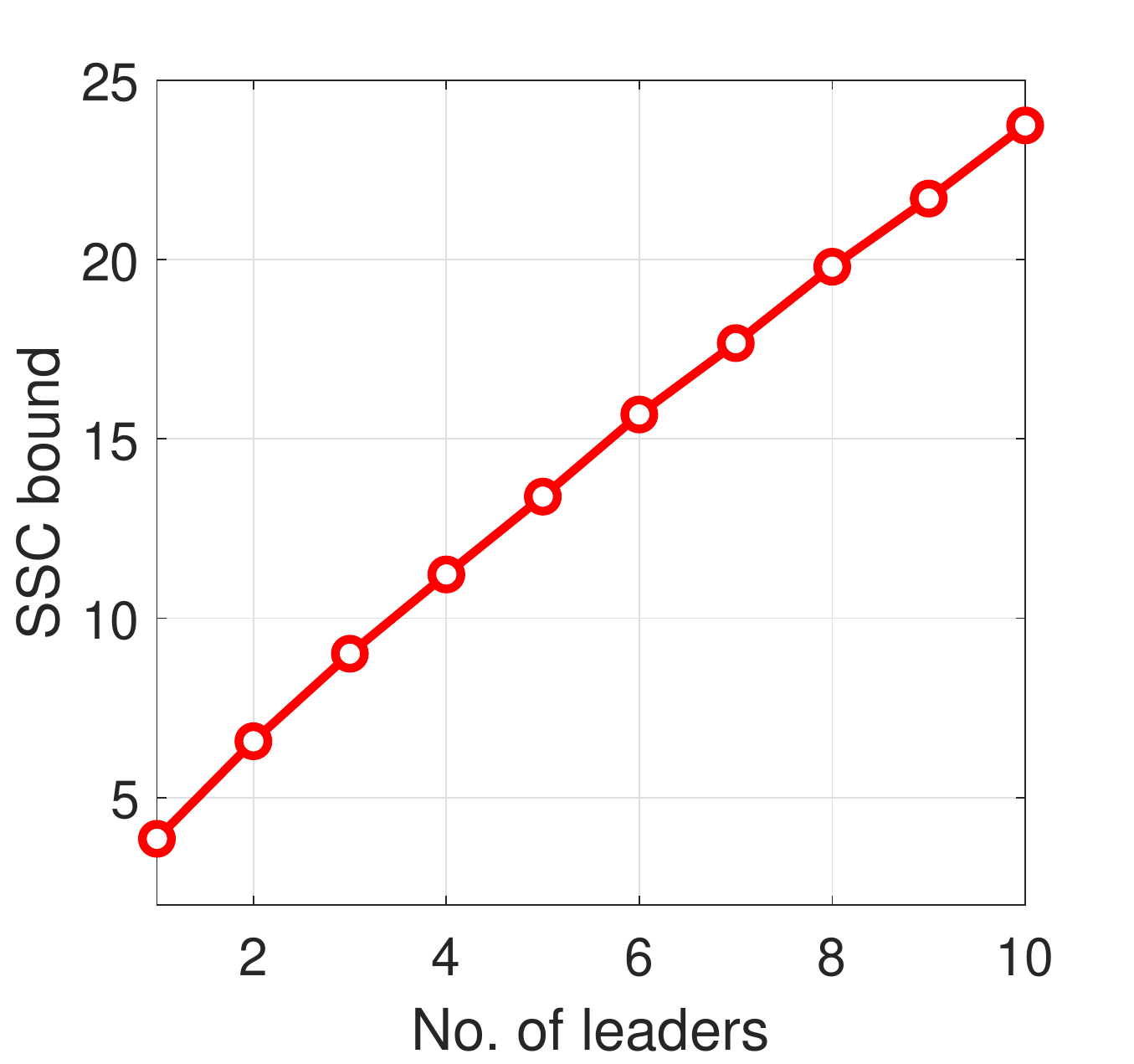}
	\caption{$\gamma=5$}
	\end{subfigure}
	 \begin{subfigure}[b]{0.22\textwidth}
	\centering
	\includegraphics[scale=0.27]{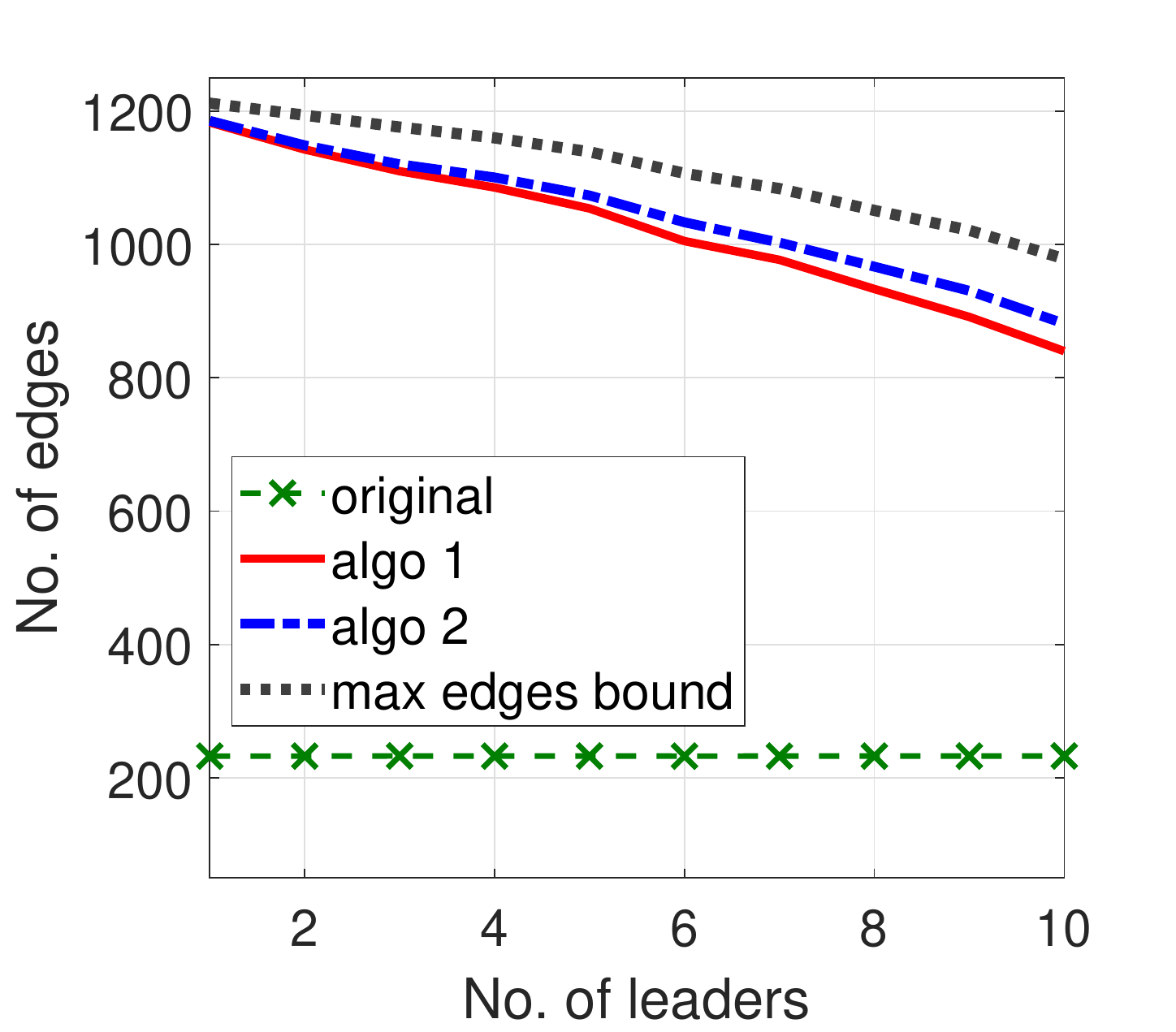}
	\caption{$\gamma=5$}
	\end{subfigure}
	\begin{subfigure}[b]{0.22\textwidth}
	\centering
	\includegraphics[scale=0.27]{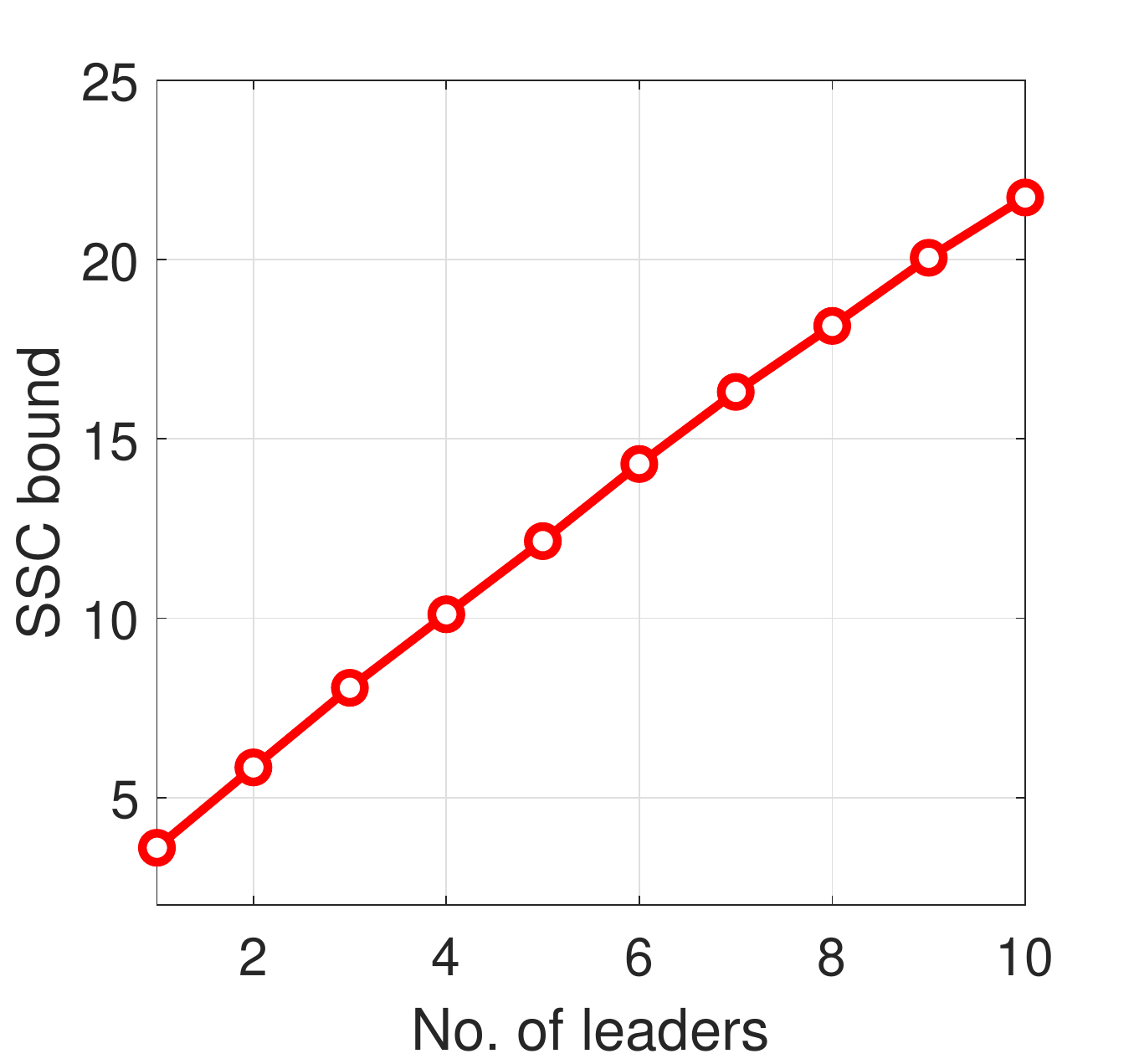}
	\caption{$\gamma=7$}
	\end{subfigure}
	 \begin{subfigure}[b]{0.22\textwidth}
	\centering
	\includegraphics[scale=0.27]{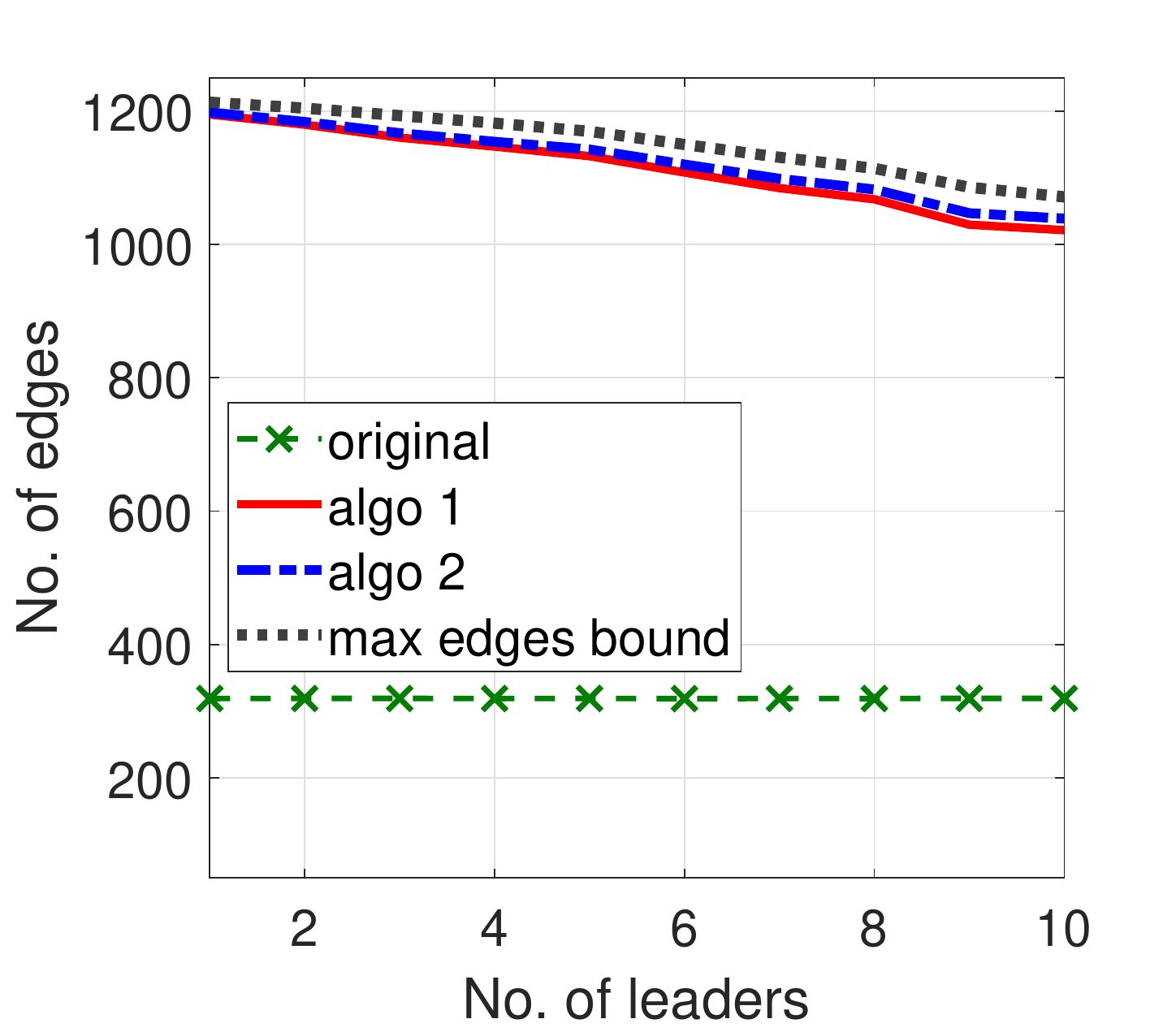}
	\caption{$\gamma=7$}
	\end{subfigure}
    \caption{\emph{BA--Networks:} For $\gamma=5$ and $7$ respectively, (a) and (c) illustrate bound on dimension of SSC as a function of number of leaders. In (b) and (d), a comparison of Algorithms 1 and 2 is illustrated.}
    \label{fig:BA_plots}
\end{figure*}
Here, we evaluate our algorithms on Erd\H{o}s-R\'{e}nyi (ER) networks in which any two nodes are adjacent with probability $p$, and Barab\'{a}si-Albert (BA) networks in which each new node is adjacent to $\gamma$ existing nodes through a preferential attachment strategy. For both ER and BA models, we consider networks of 50 nodes.

Figure \ref{fig:ER_plots} illustrates results for ER graphs. For a selected $p$, first we plot distance-based lower bound on the dimension of SSC as a function of the number of leaders selected randomly (Figures \ref{fig:ER_plots}(a) and \ref{fig:ER_plots}(c)). Then, using Algorithms 1 and 2, we add edges to networks while preserving lower bounds on their dimensions of SSC. We also compare results against an upper bound on the optimal number of edges that can be added without changing PMI sequence. The bound is obtained by observing that an edge cannot be added between two such nodes that lie on a shortest path between a leader and a node whose distance-to-leader vector is included in a given PMI sequence. This observation gives an upper bound on the maximum number of edges that can be added in a graph while preserving a PMI sequence. 

Plots in Figures \ref{fig:ER_plots}(b) and \ref{fig:ER_plots}(d) indicate that new networks obtained after applying Algorithms \ref{algo:DPEA} and \ref{algo:RA} contain significantly more edges as compared to the original network. Algorithm 2 performs slightly better than the Algorithm 1, especially when the number of leaders increases. However, as the value of $p$ increases, the difference between two algorithms is negligible. To obtain results of Algorithm 2, we perform 150 repetitions (that is $c = 150$) and then select the best solution. We also mention that Algorithm 1 takes significantly less time as compared to the Algorithm 2 (based on the choice of $c$). Similar results are obtained for BA networks as illustrated in Figure \ref{fig:BA_plots}. In all the plots, value at each point is an average of 100 randomly generated instances.
\section{Conclusion}
\label{sec:Con}
Adding extra links between nodes improves network's robustness to noisy information and to structural changes in the underlying network topology, but at the same time, these extra links can deteriorate network controllability. We introduced the problem of improving network robustness by adding edges while maintaining a lower bound on the dimension of SSC. By exploring the relationship between strong structural controllability and distances between nodes in a graph, we showed that the above problem can be formulated as an edge augmentation problem with the constraint of maintaining distances between a certain pair of nodes.  To solve the edge augmentation problem, we presented deterministic and randomized algorithms that approximately computed densest graphs preserving a given set of pairwise distances between nodes. 
We believe that characterizing densest graphs that preserve distances between certain node pairs is an interesting problem in its own respect. We aim to study this problem further and design efficient exact algorithms to solve it. 
\bibliographystyle{IEEEtran}
\bibliography{references}
\end{document}